\newtheorem{theorem}{Theorem}
\newtheorem{proposition}{Proposition}
\newtheorem{remark}{Remark}
\theoremstyle{thmstylethree}%
\newtheorem{definition}{Definition}%
\newcommand{\xn}{{\bf x}}
\newcommand{\Vn}{{\bf V}}
\newcommand{\vn}{{\bf v}}
\newcommand{\betn}{{\mbox{\boldmath $\beta$}}}
\newcommand{\mun}{{\mbox{\boldmath $\mu$}}}
\newcommand{\tetn}{{\mbox{\boldmath $\theta$}}}
\newcommand{\taun}{{\mbox{\boldmath $\tau$}}}
\begin{document}

\title{New results and regression model for the exponentiated odd log-logistic Weibull family of distributions with applications}

\date{}
\author{
Gabriela M. Rodrigues\\
Departamento de Ci\^encias Exatas\\
Universidade de S{\~a}o Paulo, Piracicaba, SP, Brazil\\
e-mail:\url{gabrielar@usp.br}
\\[0,15cm]
Roberto Vila\\
Departamento de Estat\'{\i}stica,\\
Universidade de Bras\'{\i}lia, DF, Brazil \\
e-mail:\url{rovig161@gmail.com}
\\[0,15cm]
Edwin M. M. Ortega\\
Departamento de Ci\^encias Exatas,\\
Universidade de S{\~a}o Paulo,
Piracicaba, SP, Brazil\\
e-mail:\url{edwin@usp.br}
\\[0,15cm]
Gauss M. Cordeiro\\
Departamento de Estatística,\\
Universidade de Federal de Pernambuco,
Recife, PB, Brazil\\
e-mail:\url{gausscordeiro@gmail.com}
\\[0,15cm]
Victor Serra\\
Departamento de Estat\'{\i}stica,\\
Universidade de Bras\'{\i}lia, DF, Brazil \\
e-mail:\url{victorserra92@gmail.com}
}

\maketitle
\vspace*{0.5cm}

\begin{abstract}
We obtain new mathematical properties of the exponentiated odd log-logistic family of distributions, and of its special case named the exponentiated odd log-logistic Weibull, and its log transformed. A new location and scale regression model is constructed, and some simulations are carried out to verify the behavior of the maximum likelihood estimators, and of the modified deviance-based residuals. The methodology is applied to the Japanese-Brazilian emigration data.
\end{abstract}
\noindent {\bf Keywords.} Censored data; Regression model; Residual analysis; Simulation studies; Stochastic representation.

\section{Introduction}

In the survival data analysis literature, it is convenient to consider more flexible distributions to capture a wide variety of symmetric, asymmetric
and bimodal behaviors with non-monotonic failure rate function, including as special cases classic distributions, and produce more robust estimates. At present, proposing new distributions to model survival data with non-monotonic failure rate functions is a very important research line in the area of survival analysis. Thus, we initially present new findings for the {\it exponentiated odd log-logistic} (EOLL-G) family which can be employed in several applications to real data. Further, we study two new distributions, called the exponentiated odd log-logistic Weibull (EOLLW) and log exponentiated odd log-logistic Weibull (LEOLLW), and construct a location-scale regression based on the last distribution.

Section \ref{sec:prop_gerais} provides new structural properties of the EOLL-G family.
Sections \ref{sec:prop_eollw} and \ref{sec:prop_log_eollw} define the EOLLW and
LEOLLW distributions and obtain some of their properties. Section \ref{sec:reg} constructs a LEOLLW regression model in location-scale form, reports the maximum likelihood estimates (MLEs), and provides simulations to investigate the accuracy of
the estimates. Section \ref{sec:residuos} define news deviance residuals to assess departures for the propose regression. A real data set is analyzed in Section \ref{sec:application} to show the utility of the new models. Some conclusions are offered in Section \ref{sec:conclusao}.

\section{The EOLL-G density}\label{sec:prop_gerais}

Let $G(x)$ be any baseline cumulative distribution function (cdf) with a parameter vector $\taun$. Alizadeh {\it et al.} (2020) defined the probability density
function (pdf) of the EOLL-G family (for  $x \in \mathbf{R}$) by
\begin{eqnarray}\label{pdf_g}
f(x)=f(x;a,b,\taun)=
\frac{a\,b\, g(x)\,G^{a\,b-1}(x)[1-G(x)]^{a-1} }
 {\left\{G^a(x)+[1-G(x)]^a\right\}^{b+1}},
\end{eqnarray}
where $g(x)=d G(x)/dx$, and $a > 0$ and $b > 0$ are extra shape parameters.

Henceforth, $X \sim \mbox{EOLL-G}(a,b,\taun)$
denotes  a random variable with  pdf \eqref{pdf_g}.
The EOLL-G family becomes the OLL-G family when $b=1$ (Gleaton and Lynch, 2006).
If $a=1$, it is the exponentiated (Exp-G) class (Mudholkar {\it et al.}, 1996).
For $a = b = 1$, Equation (\ref{pdf_g}) leads to the baseline $G(x)$.

If $U \sim U(0,1)$, then
\begin{eqnarray}\label{qf}
Q_{G}\left\{\frac{u^{1/(a\,b)}}{u^{1/(a\,b)}
+(1-u^{{1}/{b}})^{{1}/{a}}}\right\}\sim\mbox{EOLL-G}(a,b,\taun),\,\,
0<u<1,
\end{eqnarray}
where $Q_{G}(u)=G^{-1}(u)$ is the quantile function (qf) of the baseline G model.
In general any distribution in Equation (\ref{pdf_g}) can be simulated from (\ref{qf}) by inverting the parent cdf.

Some EOLL-G properties were addressed by Alizadeh {\it et al.} (2020). We find
new ones below.

\subsection{Properties}
Some EOLL-G properties follow directly by routine methods in calculus.
\begin{itemize}
\item[(P1)] Equation (\ref{pdf_g}) gives $\lim_{x\to\infty} f(x)=0$. If $L=\lim_{x\to 0^+} g(x)$, then
\begin{align*}
\lim_{x\to 0^+} f(x)
=
\begin{cases}
\infty, & \text{if} \ L>0, ab<1;
\\[0,1cm]
ab L, & \text{if} \ L\geqslant 0, ab=1;
\\[0,1cm]
0, & \text{if} \ L=0, ab\geqslant 1.		
\end{cases}
\end{align*}
In general, depending on the choice of the parent, $\lim_{x\to\infty} f(x)=0$ and $\lim_{x\to 0^+} f(x)=ab \lim_{x\to 0^+} [g(x)\,G^{ab-1}(x)]$.
\item[(P2)] For the hazard rate function (hrf) of $X$, say $h(x)$, it follows: $\lim_{x\to 0^+}h(x)=\lim_{x\to 0^+}f(x)$ and
\begin{align*}
\lim_{x\to\infty}h(x)=
ab \lim_{x\to \infty} \left[{g(x)\,G^{ab-1}(x)\over 1-F(x)}\right],
\end{align*}
where
\begin{align*}
h(x)
=
\frac{ab\, h_G(x)\,G^{ab-1}(x)[1-G(x)]^{a} }
{\left\{G^a(x)+[1-G(x)]^a\right\} \bigl[\left\{G^a(x)+[1-G(x)]^a\right\}^b-G^{ab}(x)\bigr]},
\end{align*}
where $h_G(x)=g(x)/[1-G(x)]$ is the baseline hrf.
\item[(P3)]
A straightforward derivative computation leads to
\begin{align}\label{f-derivative}
f'(x)
=
f(x)
\left\{
{T''(x)}-{(a+1)T^a(x)+(1-ab)\over T(x)[1+T^a(x)]}\, [T'(x)]^2
\right\},	
\end{align}
where $T(x)=G(x)/[1-G(x)]$ and $T'(x)=g(x)/[1-G(x)]^2$.
Then, the critical points of the pdf of $X$ are the roots of:
\begin{align*}
{T''(x)\over [T'(x)]^2}={(a+1)T^a(x)+(1-ab)\over T(x)[1+T^a(x)]},
\end{align*}
with
$
T''(x)
=
g'(x)/[1-G(x)]^2+{2g^2(x)/[1-G(x)]^3}
$.
\item[(P4)]
If there is a function $r(x)$ such that the differential equation
$g'(x)=-g(x)[h_G(x)+ r(x)]$ holds, Equation \eqref{f-derivative} can be written as
\begin{align}\label{f-derivative-1}
f'(x)
=
f(x)h_G(x)
\left[
1-\left\{1+{a[T^a(x)-b]\over T^a(x)+1} \right\}{1\over G(x)}-{r(x)\over h_G(x)}
\right],	
\end{align}
since $h_G(x)>0$.
Hence, from \eqref{f-derivative-1} and (P3), the critical points of the pdf of $X$ are the roots of:
\begin{align}\label{primitive-modes}
\biggl[1-{r(x)\over h_G(x)}\biggr]\,G(x)
=
1+{a[T^a(x)-b]\over T^a(x)+1}.
\end{align}
\item[(P5)]
Let $X \sim \mbox{EOLL-G}(a,b,\taun)$.
If $D$ has the Dagum distribution Type I (Dagum 1975), say $D\sim {\rm DAGUM}(a,1,b)$, then
\begin{align*}
F(x)=\mathbbm{P}[D\leqslant T(x)]
=
\mathbbm{P}\biggl[G^{-1}\biggl({D\over 1+D}\biggr)\leqslant x\biggr],
\end{align*}
where  $T(x)=G(x)/[1-G(x)]$. Consequently, the stochastic representation for $X$ holds
\begin{align*}
X=G^{-1}\biggl({D\over 1+D}\biggr).
\end{align*}

\item[(P6)]
Let $B=1/D$. It is well-known that $D\sim {\rm DAGUM}(a,1,b)$ $\Longleftrightarrow$ $B\sim {\rm BURR}(a,1,b)$, where $B$ has the Burr Type XII distribution (Burr 1942). Hence, by (P5),
\begin{align*}
F(x)=
\mathbbm{P}[B\geqslant S(x)]
=
\mathbbm{P}\biggl[G^{-1}\biggl({1\over 1+B};\taun\biggr)\geqslant x\biggr],
\end{align*}
where $S(x)=1/T(x)$.
\end{itemize}
	
\section{The EOLLW distribution}\label{sec:prop_eollw}
Consider the parent Weibull cdf
$G(x)=
1-\exp\left\{-\left(x/\lambda\right)^\alpha\right\}$, where $\lambda>0$ is a scale, and $\alpha>0$ is a shape. The EOLLW pdf is determimed from (\ref{pdf_g}) (for $x >0$) as
\begin{eqnarray}\label{pdf_eollw}
f(x)=
\frac{ab\alpha\,x^{\alpha-1}\exp\left\{-a\left(\frac{x}{\lambda}\right)^\alpha\right\} \left[1-\exp\left\{-\left(\frac{x}{\lambda}\right)^\alpha\right\}\right]^{ab-1} }
{\lambda^{\alpha}\left\{\left[1-\exp\left\{-\left(\frac{x}{\lambda}\right)^\alpha\right\}\right]^a+\exp\left[-a\left(\frac{x}{\lambda}\right)^\alpha\right]\right\}^{b+1}}.
\end{eqnarray}
Equation  (\ref{pdf_eollw}) yields $\lim_{x\to\infty}f(x)=0$. Further,
for $\alpha>0$,
\begin{align}\label{lim-exp-weibull}
	\lim_{x\to 0^+}
	f(x)
	=
	\begin{cases}
	\infty, & ab<1/\alpha,
	\\[0,1cm]
	1/\lambda^\alpha, & ab=1/\alpha,
	\\[0,1cm]
	0, & ab>1/\alpha.
	\end{cases}
\end{align}

The EOLLW distribution is very flexible due to different forms of its pdf and hrf;
see Figures \ref{eollwpdf} and \ref{eollwhrf} and Sections \ref{Modality-EOLLW}, \ref{Shape-EOLLW} and \ref{Tail-EOLLW} (for theoretical results).

\begin{figure}[h]
	\centering	
	(a)\hspace{4cm} (b) \hspace{4cm} (c)
	\includegraphics[width=1\textwidth]{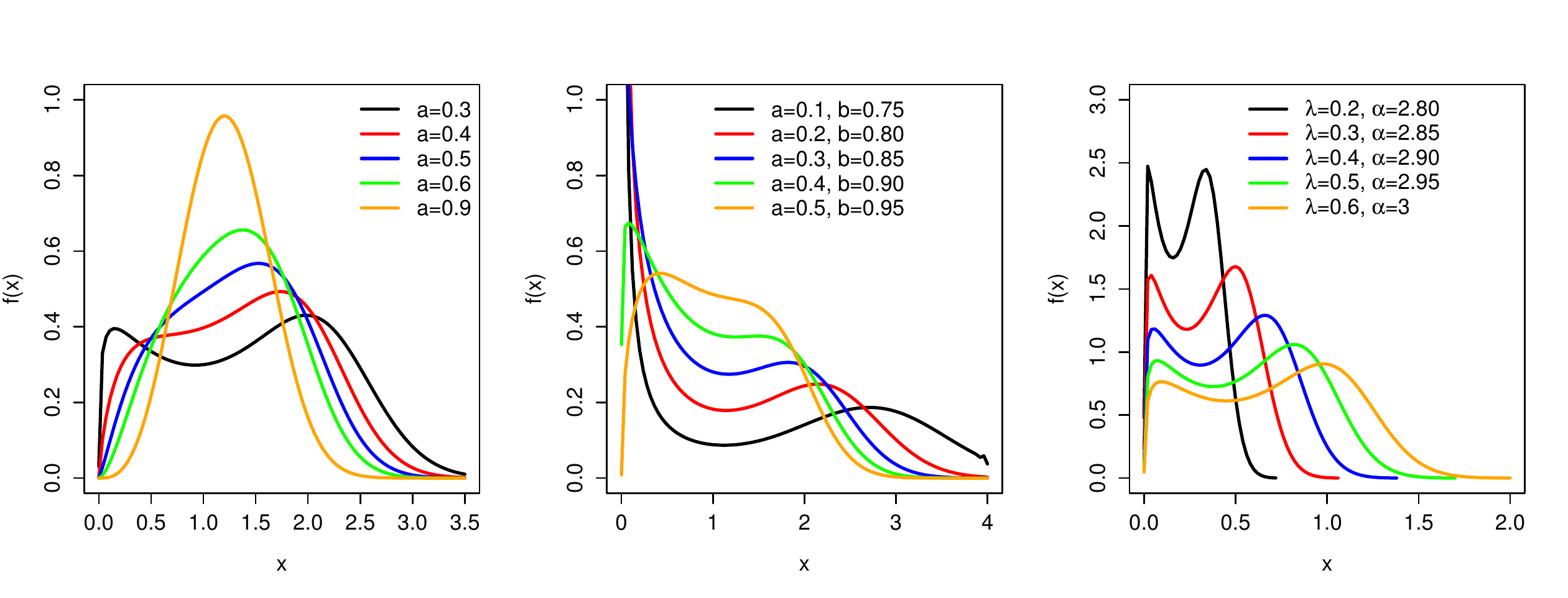}
	\caption{Plots of the EOLLW pdf. (a) For $b=1.5$, $\lambda=1.2$ and $\alpha=2.9$. (b) For $\lambda=1.2$ and $\alpha=3$. (c) For $\alpha$ and $a=0.3$ and $b=1.5$.}
	\label{eollwpdf}
	\end{figure}

\begin{figure}[h]
	\centering	
	(a)\hspace{4cm} (b) \hspace{4cm} (c)	
	\includegraphics[width=1\textwidth]{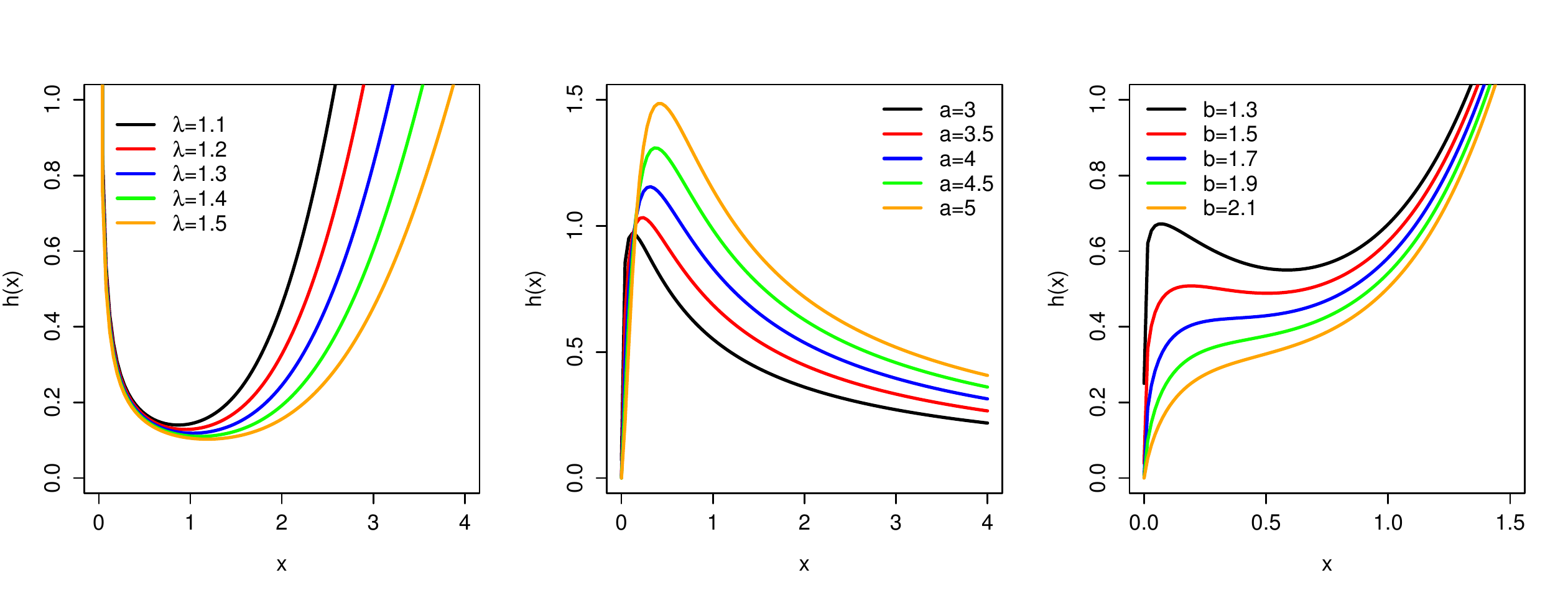}
	\caption{Plots of the EOLLW hrf. (a) For $\lambda$ and $a=0.1$, $b=2$ and $\alpha=3$. (b) For $b=2$, $\lambda=2$ and $\alpha=0.2$. (c) For $a=0.3$, $\lambda=1$ and $\alpha=2.9$.}
	\label{eollwhrf}
\end{figure}

\subsection{Modality of the EOLLW density}
\label{Modality-EOLLW}
Since $G(x)=1-\exp\left\{-\left(x/\lambda\right)^\alpha\right\}$, $T$ defined in (P3) is written as
\begin{align}\label{T-EOLLW}
T(x)
=
T(x;\alpha,\lambda)
=
\exp\Biggl\{\biggl({{x}\over{\lambda}}\biggr)^\alpha\Biggr\}-1.
\end{align}

Since  $T(x;\alpha,\lambda)=T(x;\alpha,k\lambda)$, $k > 0$, and since $F(x)=\mathbbm{P}[D\leqslant T(x)]$, with $D\sim {\rm DAGUM}(a,1,b)$ [see Property (P5)], the next result follows.
\begin{proposition}
	If $X \sim{\rm EOLLW}(a,b,\taun)$ with $\taun=(\alpha,\lambda)^T$, then $kX \sim{\rm EOLLW}(a,b,\widetilde{\taun})$ with $\widetilde{\taun}=(\alpha,k\lambda)^T$.
\end{proposition}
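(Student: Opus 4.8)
The plan is to reduce the statement to the scaling behaviour of the odds function $T$ in \eqref{T-EOLLW}, combined with the cdf characterization from Property (P5), rather than manipulating the density \eqref{pdf_eollw} directly. First I would observe that, applying (P5) to the Weibull parent, the cdf of $X\sim{\rm EOLLW}(a,b,\taun)$ with $\taun=(\alpha,\lambda)^T$ can be written for $x>0$ as $F_X(x)=\mathbbm{P}[D\leqslant T(x;\alpha,\lambda)]$, where $D\sim{\rm DAGUM}(a,1,b)$ depends only on $a$ and $b$ and not on $(\alpha,\lambda)$. Since $k>0$, the map $x\mapsto kx$ is an increasing bijection of $(0,\infty)$, so there is no change of support and, for $y>0$, $F_{kX}(y)=\mathbbm{P}[kX\leqslant y]=F_X(y/k)=\mathbbm{P}[D\leqslant T(y/k;\alpha,\lambda)]$.

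The key step is the elementary identity
\begin{align*}
T\!\left(\frac{y}{k};\alpha,\lambda\right)
=\exp\!\left\{\left(\frac{y/k}{\lambda}\right)^{\!\alpha}\right\}-1
=\exp\!\left\{\left(\frac{y}{k\lambda}\right)^{\!\alpha}\right\}-1
=T(y;\alpha,k\lambda),
\end{align*}
which is immediate from \eqref{T-EOLLW}; note that $\alpha$ is left untouched, only $\lambda$ is rescaled to $k\lambda$. Substituting this into the previous display gives $F_{kX}(y)=\mathbbm{P}[D\leqslant T(y;\alpha,k\lambda)]$, and invoking (P5) once more (now with scale parameter $k\lambda$) identifies the right-hand side as the cdf of ${\rm EOLLW}(a,b,\widetilde{\taun})$ with $\widetilde{\taun}=(\alpha,k\lambda)^T$. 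This is exactly the claim.

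As an alternative, purely computational route I would differentiate: $f_{kX}(y)=k^{-1}f_X(y/k)$, substitute $x=y/k$ into \eqref{pdf_eollw}, and check that every occurrence of $\lambda$ combines with the stray factor $k^{-1}$ to produce $k\lambda$. Concretely, $x^{\alpha-1}/\lambda^{\alpha}$ becomes $(y/k)^{\alpha-1}/\lambda^{\alpha}$, and multiplying by $k^{-1}$ yields $y^{\alpha-1}/(k\lambda)^{\alpha}$, while each exponent $(x/\lambda)^{\alpha}$ becomes $(y/(k\lambda))^{\alpha}$; hence $f_{kX}(y)$ is precisely \eqref{pdf_eollw} with $\lambda$ replaced by $k\lambda$ and $a,b,\alpha$ unchanged.

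I do not expect a genuine obstacle here: the result is a scale-family bookkeeping fact. The only points deserving an explicit line are that $k>0$ makes $x\mapsto kx$ an increasing bijection of the support (so there is no Jacobian-sign or support issue), and that the shape parameters $a$, $b$, $\alpha$ really do not participate in the rescaling — both of which are transparent from \eqref{T-EOLLW} and the form of \eqref{pdf_eollw}.
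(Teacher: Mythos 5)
Your proof is correct and follows essentially the same route as the paper: the paper also derives the result from the Dagum representation in (P5) together with the scaling identity for $T$ in \eqref{T-EOLLW} (which you state in the correct form $T(y/k;\alpha,\lambda)=T(y;\alpha,k\lambda)$, whereas the paper's displayed identity contains a typo). Your spelled-out cdf computation and the alternative density check are just more explicit versions of the paper's one-line argument.
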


A critical point of the EOLLW density by property (P4) is a positive root
of the nonlinear Equation \eqref{primitive-modes}.
A straightforward computation gives $g'(x)=-g(x) \,[h_G(x)+r(x)]$, where
$h_G(x)=(\alpha/\lambda)(x/\lambda)^{\alpha-1}$ is the hrf  of the parent $G$,
and $r(x)=(1-\alpha)/x$.
By inserting $T(x)$ in \eqref{primitive-modes}, a critical point of the EOLLW density is a root of:
\begin{align}\label{modes-1}
\mathbbm{G}_0(\omega)=\mathbbm{H}_0(\omega), 	\quad  \omega=\biggl(\frac{x}{\lambda}\biggr)^\alpha,
\end{align}
where
\begin{align}\label{G-H-def}
\mathbbm{G}_0(\omega)
=
\biggl(
1-{1-\alpha\over \alpha \omega}
\biggr)
[1-\exp(-\omega)]
\quad \text{and} \quad
\mathbbm{H}_0(\omega)
=
1+
{a\left\{\left[\exp\left(\omega\right)-1\right]^a-b\right\}\over \left[\exp\left(\omega\right)-1\right]^a+1}.
\end{align}
	
The first two derivatives of $\mathbbm{G}_0$ with respect to $\omega$ are
\begin{align*}
\mathbbm{G}'_0(\omega)
=
{ \exp(-\omega)[\alpha\omega^2+(\alpha-1)\omega+\alpha-1]-(\alpha-1)
\over \alpha \omega^2}
\end{align*}
and
\begin{align*}
\mathbbm{G}''_0(\omega)
=
-{ \exp(-\omega)\left\{\alpha\omega^3+(1-\alpha)[-\omega^2-2\omega+2\exp(\omega)-2]\right\}
\over \alpha \omega^3}.
\end{align*}
For $\omega>0$, $(1-\alpha)[-\omega^2-2\omega+2\exp(\omega)-2]>0$ if $\alpha\leqslant 1$, and then $\mathbbm{G}''_0(\omega)<0$. Moreover, $\mathbbm{G}'_0(\omega)> 0$
for  $\omega>0$, if and only if $[\alpha/(\alpha-1)]\omega^2+\omega+1<\exp(\omega)$ (for $\alpha\neq 1$), which is true for all $\alpha<1$. For $\alpha=1$, $\mathbbm{G}'_0(\omega)> 0$ when $\omega>0$. Further, for $\alpha>1$, $\mathbbm{G}'_0(\omega)= 0$ if and only if $[\alpha/(\alpha-1)]\omega^2+\omega+1=\exp(\omega)$. Since $\alpha>1$, it is natural to expect both functions $[\alpha/(\alpha-1)]\omega^2+\omega+1$ and $\exp(\omega)$ to intersect at a single point. Briefly, we have
\begin{align}\label{shape-G}
\mathbbm{G}_0 \ \text{is strictly concave and increasing for} \ 0<\alpha\leqslant 1, \ \text{and is unimodal for} \ \alpha>1.
\end{align}
Further,
\begin{align*}
\lim_{\omega\to 0^+}\mathbbm{G}_0(\omega) ={\alpha-1\over\alpha}
\quad \text{and} \quad
\lim_{\omega\to \infty}\mathbbm{G}_0(\omega) =1.
\end{align*}
	
On the other hand, the first-order derivative de $\mathbbm{H}_0$ with respect to $\omega$ holds
\begin{align}\label{derivative-H}
\mathbbm{H}'_0(\omega)
=
\frac{a^2 (1 + b) \exp(\omega) [\exp(\omega)-1]^{a-1}}{
\{[\exp(\omega)-1]^a+1\}^2}>0
\quad \text{for all} \ \omega>0.
\end{align}
Setting $z=\exp(\omega)-1$, the second-order derivative of $\mathbbm{H}_0$
with respect to $\omega$ is
\begin{align*}
\mathbbm{H}''_0(\omega)
=
-\frac{
a^2 (b + 1) (z+1) z^{a - 2}
\bigl[
az^{a+1}+(1+a)z^a-az+1-a
\bigr]
}{(z^a + 1)^3}<0 \quad \text{for all} \ a\in \mathcal{A},
\end{align*}
where
$
\mathcal{A}
=
\left\{
a>0:
az^{a+1}+(1+a)z^a-az+1-a>0, \ \forall z>0
\right\}.
$
The set $\mathcal{A}$ is non-empty because $1\in \mathcal{A}$. Further,
it can be proven that $\mathcal{A}=(0,1]$. Thus,
\begin{align}\label{shape-H}
\mathbbm{H}_0 \ \text{is strictly concave and increasing for} \ 0<a\leqslant 1,
\end{align}
with
\begin{align*}
\lim_{\omega\to 0^+}\mathbbm{H}_0(\omega)=1-ab
\quad \text{and} \quad
\lim_{\omega\to \infty}\mathbbm{H}_0(\omega)=1+a.
\end{align*}

\begin{proposition}\label{prop-one-root}
Equation \eqref{modes-1} has at least one root on $(0,\infty)$ when $ab>1/\alpha$.
\end{proposition}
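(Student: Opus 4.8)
The plan is to recast Equation \eqref{modes-1} as a sign-change problem and invoke the Intermediate Value Theorem, using the boundary limits of $\mathbbm{G}_0$ and $\mathbbm{H}_0$ that were already recorded above. Concretely, I would introduce the auxiliary function
\begin{align*}
\phi(\omega)=\mathbbm{G}_0(\omega)-\mathbbm{H}_0(\omega),\qquad \omega\in(0,\infty),
\end{align*}
which is continuous on $(0,\infty)$ since both $\mathbbm{G}_0$ and $\mathbbm{H}_0$ are continuous there (the only possible singularity, at $\omega=0$, lies outside the interval). A root of \eqref{modes-1} on $(0,\infty)$ is exactly a zero of $\phi$.

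Next I would read off the one-sided limits. From $\lim_{\omega\to 0^+}\mathbbm{G}_0(\omega)=(\alpha-1)/\alpha=1-1/\alpha$ and $\lim_{\omega\to 0^+}\mathbbm{H}_0(\omega)=1-ab$ we get
\begin{align*}
\lim_{\omega\to 0^+}\phi(\omega)=\Bigl(1-\frac1\alpha\Bigr)-(1-ab)=ab-\frac1\alpha,
\end{align*}
while from $\lim_{\omega\to\infty}\mathbbm{G}_0(\omega)=1$ and $\lim_{\omega\to\infty}\mathbbm{H}_0(\omega)=1+a$ we get $\lim_{\omega\to\infty}\phi(\omega)=-a<0$. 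Under the hypothesis $ab>1/\alpha$ the first limit is strictly positive. Hence there exist $0<\omega_1<\omega_2<\infty$ with $\phi(\omega_1)>0>\phi(\omega_2)$: pick $\omega_1$ small enough that $\phi(\omega_1)$ is within $\tfrac12(ab-1/\alpha)$ of its limit, and $\omega_2$ large enough that $\phi(\omega_2)<-a/2<0$. Applying the Intermediate Value Theorem to $\phi$ on the compact interval $[\omega_1,\omega_2]$ yields a point $\omega^\star\in(\omega_1,\omega_2)\subset(0,\infty)$ with $\phi(\omega^\star)=0$, i.e. $\mathbbm{G}_0(\omega^\star)=\mathbbm{H}_0(\omega^\star)$. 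Translating back through $\omega^\star=(x^\star/\lambda)^\alpha$ gives the asserted root; in fact $x^\star=\lambda\,(\omega^\star)^{1/\alpha}>0$.

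The argument is essentially elementary, so I do not anticipate a genuine obstacle; the only thing requiring care is making sure the boundary limits are applied correctly (in particular the cancellation $1-1/\alpha-(1-ab)=ab-1/\alpha$), and noting that continuity of $\phi$ on the open interval plus existence of finite one-sided limits is enough to extract interior points of the desired signs without needing $\phi$ to be defined at the endpoints. One could additionally remark that, because $\mathbbm{G}_0$ is increasing and concave and $\mathbbm{H}_0$ is increasing and concave only on the parameter range $0<a\le 1$, uniqueness of the root is not claimed here — the statement asserts only existence — so no monotonicity of $\phi$ is needed for this proposition.
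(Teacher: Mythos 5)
Your proof is correct and follows essentially the same route as the paper's: both compute $\lim_{\omega\to 0^+}\{\mathbbm{G}_0(\omega)-\mathbbm{H}_0(\omega)\}=ab-1/\alpha$ and $\lim_{\omega\to\infty}\{\mathbbm{G}_0(\omega)-\mathbbm{H}_0(\omega)\}=-a$ and then apply the intermediate value theorem. The only difference is that you spell out the extraction of interior points $\omega_1,\omega_2$ with the required signs, a detail the paper leaves implicit.
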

\begin{proof}
Since $\lim_{\omega\to 0^+}\{\mathbbm{G}_0(\omega)-\mathbbm{H}_0(\omega)\}=ab-(1/\alpha)$ and $\lim_{\omega\to \infty}\{\mathbbm{G}_0(\omega)-\mathbbm{H}_0(\omega)\}=-a$, the proof follows
by using the intermediate value theorem.
\end{proof}
	
The previous proposition guarantees the existence of a critical point of the
EOLLW density if $ab>1/\alpha$. The following result, under certain restrictions
on the parameters, shows that this critical point is unique.

\begin{theorem}\label{theo-shape-0}
If $\alpha=1$ and $a>0$ is an integer, the shape of the {\rm EOLLW} density is
\begin{enumerate}
\item
decreasing or decreasing-increasing-decreasing if $ab<1$;
\item
unimodal if $ab\geqslant 1$.
\end{enumerate}
\end{theorem}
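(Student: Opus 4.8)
The plan is to specialise the general critical‑point analysis of Section~\ref{Modality-EOLLW} to $\alpha=1$, where the Weibull parent degenerates to an exponential, so that the sign of $f'$ becomes the sign of an explicit polynomial in $z=e^{x/\lambda}-1$, and then to exploit convexity of that polynomial to read off the density shape.

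First I would set $\alpha=1$, so that $G(x)=1-e^{-x/\lambda}$, $T(x)=e^{x/\lambda}-1$, $h_G(x)=1/\lambda$, and $r(x)=(1-\alpha)/x\equiv 0$. Feeding this into \eqref{f-derivative-1} — equivalently, into \eqref{modes-1}--\eqref{G-H-def}, where the term $(1-\alpha)/(\alpha\omega)$ vanishes so that $\mathbbm{G}_0(\omega)=1-e^{-\omega}$ — and clearing denominators, I expect to obtain, with $z=e^{x/\lambda}-1$,
\[
\operatorname{sign} f'(x)=-\operatorname{sign}\psi(z),\qquad
\psi(z):=a\,z^{a+1}+(a+1)\,z^{a}-ab\,z+(1-ab).
\]
Since $z$ increases continuously from $0$ to $\infty$ as $x$ runs over $(0,\infty)$, the monotonicity pattern of $f$ is exactly the sign pattern of $\psi$ on $(0,\infty)$; I would record $\psi(0)=1-ab$, $\psi(z)\to+\infty$ as $z\to\infty$, and (via Proposition~\ref{prop-one-root}) that $\psi$ has at least one positive root when $ab>1$.

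The next step is to analyse $\psi$. The key observation is that $\psi''(z)=a(a+1)\,z^{a-2}\,[az+(a-1)]>0$ for every $z>0$ (only $a\geqslant 1$ is needed, so the integer hypothesis is more than enough), i.e.\ $\psi$ is strictly convex on $(0,\infty)$ and hence has at most two zeros there, counted with multiplicity; alternatively one can count zeros by Descartes' rule of signs, which is where the polynomial degree $a+1$ and thus the integrality of $a$ naturally enter. Combining strict convexity with $\psi(+\infty)=+\infty$ and the sign of $\psi(0)=1-ab$ then produces the two claimed regimes. When $ab<1$ one has $\psi(0)>0$, and a strictly convex function positive at the left endpoint and at $+\infty$ either stays positive on $(0,\infty)$ (at most touching $0$ at a double root) or dips negative on a single subinterval — sign pattern $+$ or $+,-,+$ — which translates, through $\operatorname{sign} f'=-\operatorname{sign}\psi$, into a decreasing or a decreasing‑increasing‑decreasing density; this is item~1. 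When $ab\geqslant 1$ one has $\psi(0)\leqslant 0$, and a strictly convex function that is $\leqslant 0$ at the left endpoint and tends to $+\infty$ changes sign at most once, necessarily from $-$ to $+$, so the density is increasing‑then‑decreasing, i.e.\ unimodal; the residual no‑sign‑change possibility ($\psi>0$ throughout) occurs only in the trivial instance $a=b=1$ (the exponential), where the density is monotone decreasing and is regarded as the degenerate unimodal shape with mode at the origin, consistent with $f(0^+)=1/\lambda^{\alpha}$ from \eqref{lim-exp-weibull}. This is item~2.

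The step I expect to require the most care is pinning down the number and multiplicity of the positive roots of $\psi$ in the case $ab<1$: a priori the sign of $f'$ could change several times and produce a shape more complicated than the two listed, so one must be sure that "$0$ or $2$ positive roots" is exhaustive. Strict convexity of $\psi$ on $(0,\infty)$ — or, equivalently here, Descartes' rule of signs together with the parity it forces — is precisely what caps the number of sign changes at two; the signs of $\psi(0)=1-ab$ and of $\psi(+\infty)$ then force exactly the two alternatives in item~1 and the single alternative (up to the degenerate monotone case) in item~2. Everything else is routine algebraic simplification and a sign chase.
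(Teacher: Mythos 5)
Your proof is correct, and it reaches the same reduction as the paper — the sign of $f'$ is governed by the polynomial $p(z)=az^{a+1}+(1+a)z^{a}-abz+(1-ab)$ in $z=\mathrm{e}^{x/\lambda}-1$ — but it controls the roots of that polynomial by a genuinely different device. The paper invokes Descartes' rule of signs (sign sequences $+,+,-,+$ and $+,+,-,-$), which is why it must assume $a$ is a positive integer, and then combines the resulting root count with the boundary limits in \eqref{lim-exp-weibull} to infer the shape. You instead observe that $p''(z)=a(a+1)z^{a-2}[az+(a-1)]>0$ for $z>0$ whenever $a\geqslant 1$, so $p$ is strictly convex on $(0,\infty)$ and has at most two zeros there; together with $p(0)=1-ab$ and $p(+\infty)=+\infty$ this yields exactly the same case split, and because you track $\operatorname{sign}f'=-\operatorname{sign}p$ directly you get the monotonicity pattern without needing the limits of $f$ at the endpoints as a separate ingredient. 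Your route buys two things: it dispenses with the integrality of $a$ (any real $a\geqslant 1$ suffices, so the theorem's hypothesis could be weakened), and it handles the boundary case $ab=1$ explicitly — you correctly isolate $a=b=1$ (the exponential) as the only instance where $p>0$ throughout and the density is monotone decreasing rather than interior-unimodal, a degenerate case the paper's proof omits with ``steps analogous to the proof for $ab>1$.'' The one caveat is cosmetic: calling that exponential case ``unimodal with mode at the origin'' is a charitable reading of the theorem's statement, but it is the same reading the paper itself needs.
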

\begin{proof}
For $\alpha=1$, Equation \eqref{modes-1} becomes
\begin{align*}
p(z)=az^{a+1}+(1+a)z^a-abz+(1-ab)=0, \quad z=\exp(\omega)-1,
\end{align*}
where $a$ is an integer. The number of zeros of $p(z)$ determines the number
of the critical points of the EOLLW distribution.
		
Let $ab<1$. By Descartes' rule of signs (Griffiths, 1947; and Xue, 2012), the polynomial $p(z)$ has two sign changes (the sequence signs is $+, +, -, +$),
meaning that $p(z)=0$ has two or zero positive roots. First, assume that $p(z)=0$ has two positive roots, say $z_1$ and $z_2$. Then, the EOLLW density has two
critical points $x_1 = \lambda\log(1+z_1)$ and $x_2 = \lambda\log(1+z_2)$. Since $\lim_{x\to 0^+} f(x)=\infty$ and $\lim_{x\to\infty} f(x)=0$
when $ab<1$, see \eqref{lim-exp-weibull}, it follows that the EOLLW
pdf is decreasing-increasing-decreasing.
Second, if $p(z)=0$ has zero positive roots, then the EOLLW pdf has no critical
point. Since $f(x)$ explodes at the origin, the EOLLW density is decreasing.
		
On the other hand, let $ab>1$. Again, by Descartes' rule of signs,
the polynomial $p(z)$ has one sign change (the sequence signs is $+, +, -, -$). This means that $p(z)=0$ has a unique positive root, and then the EOLLW pdf has one critical point. By  \eqref{lim-exp-weibull}, $\lim_{x\to 0^+} f(x)=\lim_{x\to\infty} f(x)=0$
when $ab>1$, and then the unimodality of the EOLLW pdf follows.
		
The proof for the case $ab=1$ follows by combining the limit in \eqref{lim-exp-weibull} with steps analogous to the proof for $ab>1$. Therefore, this is omitted.
\end{proof}
	
It is an arduous task to find (or provide optimal above bounds for the number of)
roots of general nonlinear equations. Numerical methods are suitable for this
purpose. From the facts that $\mathbbm{G}_0$ and $\mathbbm{H}_0$ have the
shapes in \eqref{shape-G} and \eqref{shape-H}, respectively, and that $\mathbbm{G}_0-\mathbbm{H}_0$ is not a periodic function,
it is plausible that (depending on the parameters chosen) that
Equation \eqref{modes-1} has at most three roots, but we do not have a
proof. Using the Rouche's theorem related to the number of roots in discs
centered at zero, perhaps this can be useful to deal with this question.
In order to establishing the shape of the EOLLW distribution, we suppose
that $\mathbbm{G}_0-\mathbbm{H}_0$ has at most three zeros, i.e.,
we have the following scenarios:
\begin{align*}
\begin{array}{lllll}
&\text{(i)}
& \text{Plots of} \ \mathbbm{G}_0 \ \text{and} \ \mathbbm{H}_0 \ \text{do not have a point of intersection;}
\\[0.15cm]
&\text{(ii)}
& \text{Plots of} \ \mathbbm{G}_0 \ \text{and} \ \mathbbm{H}_0 \ \text{have a single common point;}
\\[0.15cm]
&\text{(iii)}
& \text{Plots of} \ \mathbbm{G}_0 \ \text{and} \ \mathbbm{H}_0 \ \text{have two common  points; and}
\\[0.15cm]
&\text{(iv)}
& \text{Plots of} \ \mathbbm{G}_0 \ \text{and} \ \mathbbm{H}_0 \ \text{have three common points.}
\end{array}
\end{align*}
		
\begin{theorem}\label{theo-shape-1}
If $X\sim{\rm EOLLW}(a,b,\taun)$ and
$0<a \leqslant 1$, the shape of the pdf of $X$ is
\begin{enumerate}
\item decreasing or decreasing-increasing-decreasing when $ab<1/\alpha$;
\item decreasing or uni/bimodal or decreasing-increasing-decreasing when $ab=1/\alpha$.
\end{enumerate}
\end{theorem}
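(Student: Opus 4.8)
The plan is to reduce the shape question to a count of sign changes of a single scalar function and then read off the admissible shapes from the boundary behaviour. By property~(P4) together with \eqref{modes-1}--\eqref{G-H-def}, writing $\omega=(x/\lambda)^\alpha$ the sign of $f'(x)$ equals the sign of $\phi(\omega):=\mathbbm{G}_0(\omega)-\mathbbm{H}_0(\omega)$ on $(0,\infty)$, so the ``shape'' of $f$ is governed by the sign changes of $\phi$ once the endpoint values $\lim_{x\to0^+}f(x)$ and $\lim_{x\to\infty}f(x)$ from \eqref{lim-exp-weibull} are known. I would first record the limits of $\phi$ already used in the proof of Proposition~\ref{prop-one-root}: $\lim_{\omega\to0^+}\phi(\omega)=ab-1/\alpha$ and $\lim_{\omega\to\infty}\phi(\omega)=-a<0$. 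The hypothesis $0<a\leqslant1$ puts $a$ in $\mathcal A=(0,1]$, so by \eqref{shape-H} and \eqref{derivative-H} the function $\mathbbm{H}_0$ is strictly concave and increasing, while $\mathbbm{G}_0$ is strictly concave increasing for $0<\alpha\leqslant1$ and unimodal for $\alpha>1$ by \eqref{shape-G}; this is exactly the regime in which the scenarios (i)--(iv) are posited, i.e. that $\phi$ has at most three zeros on $(0,\infty)$, and that standing assumption is the only extra input the argument needs.

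For item~1, $ab<1/\alpha$ gives $\phi(0^+)<0$ and $\phi(\infty)<0$, so $\phi$ changes sign an even number of times on $(0,\infty)$; together with ``at most three zeros'' this forces $0$ or $2$ sign changes. Since $\lim_{x\to0^+}f(x)=\infty$ and $\lim_{x\to\infty}f(x)=0$ by \eqref{lim-exp-weibull}, the case of $0$ sign changes (possible tangential zeros aside) means $f'<0$ throughout, i.e. $f$ is decreasing, and the case of $2$ sign changes must follow the pattern $-,+,-$, i.e. $f$ is decreasing-increasing-decreasing. For item~2, $ab=1/\alpha$ gives $\phi(0^+)=0$ and $\phi(\infty)<0$; here $\lim_{x\to0^+}f(x)=1/\lambda^\alpha$ is finite and positive and $\lim_{x\to\infty}f(x)=0$, so I would split on the sign of $\phi$ just to the right of the origin. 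If $\phi<0$ near $0$ the number of sign changes is even, hence $0$ or $2$, yielding a decreasing or a decreasing-increasing-decreasing density; if $\phi>0$ near $0$ the number of sign changes is odd, hence $1$ or $3$, yielding a unimodal or a bimodal density. The four possibilities together are precisely ``decreasing, or uni/bimodal, or decreasing-increasing-decreasing''.

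The main obstacle is that the ``at most three zeros'' input is, as the text before the theorem admits, unproved in general; to make the statement unconditional in the range $0<a\leqslant1$ one would want to derive it from the concavity of $\mathbbm{H}_0$. A natural attempt is the substitution $s=1/\{(e^\omega-1)^a+1\}$, a decreasing bijection of $(0,\infty)$ onto $(0,1)$ that linearises $\mathbbm{H}_0$ into $(1+a)-a(1+b)s$; then $\phi=0$ becomes the intersection of that line with the transplanted curve $s\mapsto\mathbbm{G}_0\bigl(\log(1+((1-s)/s)^{1/a})\bigr)$, and it would suffice to show this curve has at most one inflection point on $(0,1)$, since a line meets such a curve at most three times. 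Proving the one-inflection claim is the real difficulty, because the composition is transcendental and its second derivative does not obviously have a single sign change; a lesser, bookkeeping-type difficulty is to keep track throughout that it is the sign changes of $\phi$, not its raw zero count, that drive the classification, so that tangential critical points are correctly disregarded.
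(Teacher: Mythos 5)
Your proposal is correct and follows essentially the same route as the paper: reduce to the intersection equation $\mathbbm{G}_0(\omega)=\mathbbm{H}_0(\omega)$, invoke the standing assumption that $\mathbbm{G}_0-\mathbbm{H}_0$ has at most three zeros, and classify the shapes from the boundary limits in \eqref{lim-exp-weibull}. Your parity-of-sign-changes bookkeeping is just a tidier way of ruling out the one- and three-intersection scenarios in item 1 than the paper's contradiction argument, and your closing remarks on tangential zeros and on trying to actually prove the at-most-three-zeros claim go slightly beyond what the paper attempts, but the core argument is the same.
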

\begin{proof}
Equation \eqref{lim-exp-weibull} gives $\lim_{x\to 0^+} f(x)=\infty$ and $\lim_{x\to\infty} f(x)=0$ when $ab<1/\alpha$, which implies that:
for scenario (i), the {\rm EOLLW} pdf has no critical point,
and so this one is decreasing;
for scenario (ii), the {\rm EOLLW} pdf has exactly one critical point,
but this is a contradiction with the fact that the pdf explodes at the origin
and disappears at infinity, and then this case cannot occur;
for scenario (iii) the  {\rm EOLLW} pdf has two critical points,
and then this one is decreasing-increasing-decreasing;
and for scenario (iv) the {\rm EOLLW} pdf has three critical points
which leads to a contradiction by using the same argument as in scenario (ii).
This proves the first item.

By Equation \eqref{lim-exp-weibull},
$\lim_{x\to 0^+}
f(x)=1/\lambda^\alpha
$
and
$\lim_{x\to\infty} f(x)=0$ when $ab= 1/\alpha$. This ensures that,
for scenario (i) the  {\rm EOLLW} pdf has no critical point, and so this one is decreasing; for (ii) the {\rm EOLLW} pdf has exactly one critical point, and then
it is unimodal with mode greater than $1/\lambda^\alpha$;
for (iii) the {\rm EOLLW} pdf has two critical points, and then this
one is decreasing-increasing-decreasing; and for (iv) the {\rm EOLLW} pdf has three critical points, and then this one is bimodal. This proves the second item.
\end{proof}	
	
\begin{theorem}\label{theo-shape-2}
If $X \sim{\rm EOLLW}(a,b,\taun)$ and
$0<a\leqslant 1$, then the shape of the pdf of $X$ is uni-or bimodal
when $a b>1/\alpha$.
\end{theorem}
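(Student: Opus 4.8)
The plan is to run, for this regime, exactly the sign‑pattern analysis used in the proofs of Theorems~\ref{theo-shape-0} and~\ref{theo-shape-1}. First I would make explicit the link between the monotonicity of $f$ and the function $\mathbbm{G}_0-\mathbbm{H}_0$. Substituting $T(x)$ from \eqref{T-EOLLW} and the EOLLW expressions $h_G(x)=(\alpha/\lambda)(x/\lambda)^{\alpha-1}$, $r(x)=(1-\alpha)/x$ into \eqref{f-derivative-1} and writing $\omega=(x/\lambda)^\alpha$, one gets $f'(x)=f(x)\,[h_G(x)/G(x)]\,[\mathbbm{G}_0(\omega)-\mathbbm{H}_0(\omega)]$ for $x>0$, because $G(x)\,[1-r(x)/h_G(x)]=\mathbbm{G}_0(\omega)$ and $1+a[T^a(x)-b]/[T^a(x)+1]=\mathbbm{H}_0(\omega)$. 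Since $f(x)>0$, $h_G(x)>0$, $G(x)>0$ on $(0,\infty)$ and $x\mapsto\omega=(x/\lambda)^\alpha$ is an increasing bijection of $(0,\infty)$, the sign of $f'$ along $(0,\infty)$ coincides with the sign of $\psi:=\mathbbm{G}_0-\mathbbm{H}_0$; hence the number and type of modes of $f$ are read off from the sign pattern of $\psi$, and the critical points of $f$ are the roots of \eqref{modes-1}.

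Next I would record the boundary data available when $ab>1/\alpha$. By \eqref{lim-exp-weibull}, $\lim_{x\to 0^+}f(x)=\lim_{x\to\infty}f(x)=0$; and from the limits computed just before Proposition~\ref{prop-one-root}, $\lim_{\omega\to 0^+}\psi(\omega)=ab-(1/\alpha)>0$ and $\lim_{\omega\to\infty}\psi(\omega)=-a<0$. Thus $\psi$ takes both signs, so (consistently with Proposition~\ref{prop-one-root}) it has at least one zero on $(0,\infty)$ and $f$ is non‑monotone; since $f$ vanishes at both ends it has at least one interior local maximum, which already rules out the ``decreasing'' behaviour that was admissible in the companion case $ab=1/\alpha$ of Theorem~\ref{theo-shape-1}.

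Then I would invoke the standing assumption that $\psi$ has at most three zeros on $(0,\infty)$, i.e.\ one of the scenarios (i)--(iv), and combine it with the boundary signs $\psi(0^+)>0>\psi(\infty)$. Scenario (i) is impossible (with no zero, $\psi$ would be single‑signed). In scenario (ii) the unique zero $\omega^\star$ splits $(0,\infty)$ into $\{\psi>0\}$ and $\{\psi<0\}$, so $f$ increases then decreases and is unimodal. In scenario (iv) the three transversal zeros $\omega_1<\omega_2<\omega_3$ force the sign word $(+,-,+,-)$, so $f$ is increasing--decreasing--increasing--decreasing with local maxima at $\omega_1,\omega_3$ and a local minimum at $\omega_2$ (genuine peaks, since $f>0$ and $f(0^+)=f(\infty)=0$): $f$ is bimodal. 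Finally, scenario (iii): two transversal common points would give the word $(+,-,+)$, contradicting $\psi(\infty)=-a<0$, so that sub‑case does not occur; if instead one common point is a tangency (a zero at which $\psi$ does not change sign), then $\psi$ has a single effective sign change, the tangency yields only a horizontal inflection of $f$ rather than an extremum, and $f$ is again unimodal. Assembling these cases shows the EOLLW density is uni‑ or bimodal when $ab>1/\alpha$.

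The main obstacle is the one the authors themselves flag: the whole argument rests on the unproven hypothesis that $\mathbbm{G}_0-\mathbbm{H}_0$ has at most three zeros, so the real content is the bookkeeping of admissible sign words under $\psi(0^+)>0$, $\psi(\infty)<0$ and $f(0^+)=f(\infty)=0$. Within that bookkeeping the only delicate point is scenario (iii) and, more generally, tangential zeros of $\psi$: one must be sure that a zero at which $\psi$ does not flip sign gives a point where $f'$ merely touches $0$ and creates no new mode. This is where the hypothesis $0<a\leqslant 1$ earns its keep, since by \eqref{shape-H} it makes $\mathbbm{H}_0$ strictly concave and increasing while \eqref{shape-G} controls $\mathbbm{G}_0$, which limits how the two curves can be mutually tangent and keeps the list of scenarios (i)--(iv) in force.
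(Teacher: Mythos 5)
Your proposal is correct and follows essentially the same route as the paper's proof: it uses the boundary limits from \eqref{lim-exp-weibull} together with the limits $\lim_{\omega\to 0^+}\{\mathbbm{G}_0(\omega)-\mathbbm{H}_0(\omega)\}=ab-1/\alpha>0$ and $\lim_{\omega\to\infty}\{\mathbbm{G}_0(\omega)-\mathbbm{H}_0(\omega)\}=-a<0$, and then runs the case analysis over scenarios (i)--(iv) under the standing assumption of at most three intersection points. Your version is in fact slightly more careful than the paper's (the explicit factorization $f'(x)=f(x)[h_G(x)/G(x)][\mathbbm{G}_0(\omega)-\mathbbm{H}_0(\omega)]$ and the treatment of tangential zeros in scenario (iii), which the paper dismisses with only a terse appeal to contradiction), but the argument and its reliance on the unproven three-zero hypothesis are the same.
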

\begin{proof}
Equation \eqref{lim-exp-weibull} gives $\lim_{x\to 0^+} f(x)=\lim_{x\to\infty} f(x)=0$ when $ab> 1/\alpha$. Consequently, for scenario (i) the {\rm EOLLW} pdf must be the zero function, which cannot occur; for (ii) the {\rm EOLLW} pdf has exactly one critical point, and then this one is unimodal; for (iii) the {\rm EOLLW} pdf has two critical points, but this is a contradiction with the definiton of a pdf, and therefore this case cannot occur; and for (iv) the {\rm EOLLW} pdf has three critical points, and then this one is bimodal. This proves the theorem.
\end{proof}
	
\begin{remark}
Notice that, by Theorems \ref{theo-shape-0}, \ref{theo-shape-1} and \ref{theo-shape-2}, the shape of the {\rm EOLLW} pdf is independent of the choice of $\lambda$.
\end{remark}
\begin{remark}
Since  $\lim_{x\to 0^+} f(x)=\lim_{x\to\infty} f(x)=0$ when $ab> 1/\alpha$, an alternative proof of Theorem \ref{theo-shape-2} follows immediately by the
application of Proposition \ref{prop-one-root}.
\end{remark}
\begin{remark}	
For some values of $a$, $b$ and $\alpha$, we find the table:
\begin{align*}
\begin{array}{ccccccc}
\toprule
a & b & ab & \alpha & 1/\alpha & {\rm inequality} & {\rm modality} \\
\midrule
0.3 & 1.5 & 0.45 & 2.9 & 0.34 & ab>1/\alpha & {\rm U}$-${\rm or \, B} \\
0.4 & 1.5 & 0.60 & 2.9 & 0.34 & ab>1/\alpha &{\rm U}$-${\rm or \, B} \\
0.5 & 1.5 & 0.75 & 2.9 & 0.34 & ab>1/\alpha &{\rm U}$-${\rm or \, B} \\
0.6 & 1.5 & 0.90 & 2.9 & 0.34 & ab>1/\alpha &{\rm U}$-${\rm or \, B} \\
0.9 & 1.5 & 1.35 & 2.9 & 0.34 & ab>1/\alpha &{\rm U}$-${\rm or \, B} \\
\bottomrule
\end{array}
\end{align*}
By applying Theorem \ref{theo-shape-2}, the {\rm EOLLW} pdf is uni-or
bimodal (U-or B), which is supported with the density plot in Figure \ref{eollwpdf} (a).
		
For some values of $a$, $b$ and $\alpha$, we find:
\begin{align*}
\begin{array}{ccccccc}
\toprule
a & b & ab & \alpha & 1/\alpha & {\rm inequality} & {\rm modality} \\
\midrule
0.1 & 0.75 & 0.08 & 3 & 0.33 & ab<1/\alpha & {\rm D}$-${\rm or \ DID} \\
0.2 & 0.80 & 0.16 & 3 & 0.33 & ab<1/\alpha &{\rm D}$-${\rm or \, DID} \\
0.3 & 0.85 & 0.25 & 3 & 0.33 & ab<1/\alpha &{\rm D}$-${\rm or \, DID} \\
0.4 & 0.90 & 0.36 & 3 & 0.33 & ab>1/\alpha &{\rm U}$-${\rm or \, B} \\
0.5 & 0.95 & 0.48 & 3 & 0.33 & ab>1/\alpha &{\rm U}$-${\rm or \, B} \\
\bottomrule
\end{array}
\end{align*}
By applying Item 1 of Theorem \ref{theo-shape-1} and  Theorem \ref{theo-shape-2},
the {\rm EOLLW} pdf is decreasing or decreasing-increasing-decreasing
(D-or DID) and uni-or bimodal (U-or B), respectively, which is in
agreement with the pdf plot in Figure \ref{eollwpdf} (b).	
		
For some values of $a$, $b$ and $\alpha$, it follows:
\begin{align*}
\begin{array}{ccccccc}
\toprule
a & b & ab & \alpha & 1/\alpha & {\rm inequality} & {\rm modality} \\
\midrule
0.3 & 1.5 & 0.45 & 2.80 & 0.36 & ab>1/\alpha & {\rm U}$-${\rm or \, B} \\
0.3 & 1.5 & 0.45 & 2.85 & 0.35 & ab>1/\alpha &{\rm U}$-${\rm or \, B} \\
0.3 & 1.5 & 0.45 & 2.90 & 0.34 & ab>1/\alpha &{\rm U}$-${\rm or \, B} \\
0.3 & 1.5 & 0.45 & 2.95 & 0.34 & ab>1/\alpha &{\rm U}$-${\rm or \, B} \\
0.3 & 1.5 & 0.45 & 3.00 & 0.33 & ab>1/\alpha &{\rm U}\text{-}{\rm or \, B} \\
\bottomrule
\end{array}
\end{align*}
By applying Theorem \ref{theo-shape-2}, the {\rm EOLLW} pdf is uni-or bimodal (U-or B), which is compatible with the pdf plot in Figure \ref{eollwpdf} (c).
\end{remark}

\subsection{Shapes of the EOLLW hrf} \label{Shape-EOLLW}
Let $\eta(x)= -{f'(x)/ f(x)}$, where $f(x)$ is the EOLLW pdf in \eqref{pdf_eollw}.  Following Glaser (1980), we characterize the hrf $h(x)$ of $X \sim{\rm EOLLW}(a,b,\taun)$ through $\eta$.
	
For simplicity, let $\alpha=1$. By \eqref{f-derivative-1}, $\eta$ can be
written as
\begin{align*}
\eta(x)
=
{1\over\lambda}
\left[{ \mathbbm{H}_0(\omega) \over 1-\exp(-\omega) }-1\right],
\quad \omega=\frac{x}{\lambda},
\end{align*}
where $\mathbbm{H}_0(\omega)$ is as in \eqref{G-H-def}. By differentiating $\eta$
with respect to $x$, and using the formula \eqref{derivative-H} for $\mathbbm{H}_0'(\omega)$, we obtain
\begin{align}\label{eta-derivative}
\eta'(x)
=
{(t+1)\over\lambda^2 t^2(t^a+1)^2}\,
{p_{a,b}(t)}, \quad
t=\exp(\omega)-1,
\end{align}
where
\begin{align}\label{polynomial-ab}
p_{a,b}(t)
=
(1+a)t^{2a}+a^2(1+b)t^{a+1}+[a^2(1+b)+a(1-b)+2]t^a+(1-ab).	
\end{align}
So, the number of roots of $p_{a,b}(t)$ determines the number of  critical points
of $\eta$.
	
To state and prove the next result, we define
\begin{align*}
\mathcal{B}
=\{(a,b)\in(0,\infty)\times(0,\infty): p_{a,b}(t)>0, \ \forall t>0\}.
\end{align*}
By choosing $a,b>0$ such that $a^2(1+b)+a(1-b)+2\geqslant 0$
and $ab\leqslant 1$, we guarantee that the set $\mathcal{B}$ is non-empty.
	
\begin{theorem}
Let $X \sim{\rm EOLLW}(a,b,\taun)$ and $\alpha=1$.
\begin{enumerate}
\item
If $(a,b)\in \mathcal{B}$, the hrf of $X$ is increasing.
\item
Suppose $a^2(1+b)+a(1-b)+2\geqslant 0$ and $ab>1$, and $a>0$ integer.
For example, take $a\geqslant 1$ integer and $b>1/a$.
\begin{enumerate}
\item If there exists $0<x^*<a/\lambda $ such that $h'(x^*)= 0$, then the hrf of $X$ has bathtub (BT) shape.
\item If there does not exist $0<x^*<a/\lambda $ such that $h'(x^*)= 0$,
then the hrf of $X$ is increasing.
\end{enumerate}
\item Let $a^2+3a-1>0$ and $ab=1$, and $a>0$ integer. For example, take $0<a<(\sqrt{13}-3)/2$ and  $b>1/a$. Under the condition of Item (a) $[$respectively, Item (b)$]$, the hrf of $X$ has BT shape $[$respectively, is increasing$]$.
\end{enumerate}
\end{theorem}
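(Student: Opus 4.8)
The plan is to characterize the shape of the hrf via Glaser's lemma applied to $\eta(x) = -f'(x)/f(x)$. From Equation \eqref{eta-derivative}, the sign of $\eta'(x)$ is governed entirely by the sign of the polynomial $p_{a,b}(t)$ in \eqref{polynomial-ab}, since the prefactor $(t+1)/[\lambda^2 t^2(t^a+1)^2]$ is strictly positive for $t>0$. So the whole argument reduces to counting sign changes of $p_{a,b}(t)$ on $(0,\infty)$ and translating Glaser-type conclusions about $\eta$ into statements about $h$. I would state at the outset the three standard Glaser implications I need: (a) if $\eta' > 0$ on $(0,\infty)$, then $h$ is increasing; (b) if $\eta' < 0$, then $h$ is decreasing; (c) if $\eta$ is first decreasing then increasing (i.e.\ $\eta'$ changes sign once, from $-$ to $+$), then $h$ is bathtub-shaped \emph{provided} $\lim_{x\to 0^+} f(x) = \infty$ or the appropriate boundary behavior of $f$ holds; here, since $\alpha=1$ and (for items 2--3) $ab \geqslant 1$, we have $\lim_{x\to 0^+} f(x) = 0$, so I will invoke the refinement of Glaser's lemma that uses $\lim_{x\to 0^+}\eta(x)$ and compares it to the limiting behavior at infinity to distinguish BT from increasing.

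For Item 1, when $(a,b) \in \mathcal{B}$ we have $p_{a,b}(t) > 0$ for all $t>0$ by definition, hence $\eta'(x) > 0$ for all $x>0$, so $\eta$ is strictly increasing and $h$ is increasing by implication (a). This is immediate.

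For Item 2, the hypothesis $a$ is a positive integer makes $p_{a,b}(t)$ a genuine polynomial (degree $2a$, or $a+1$ if $a=1$), so Descartes' rule of signs applies. Under $a^2(1+b)+a(1-b)+2 \geqslant 0$ and $ab>1$, the coefficient sequence of $p_{a,b}$ (in the monomials $t^{2a}, t^{a+1}, t^{a}, t^0$, with care about coincidences and the ordering of exponents when $a=1$ versus $a\geqslant 2$) reads $+,+,+,-$, giving exactly one sign change, hence exactly one positive root $t^*$, hence $\eta'$ vanishes exactly once on $(0,\infty)$; moreover $p_{a,b}(0^+) = 1-ab < 0$ and $p_{a,b}(\infty) = +\infty$, so $\eta' < 0$ then $\eta' > 0$ — that is, $\eta$ decreases then increases. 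The corresponding critical point of $h$ occurs at $x^*$ with $\exp(x^*/\lambda) - 1 = t^*$. Now the dichotomy in 2(a)/2(b) on whether $x^* < a/\lambda$ is, I expect, the bookkeeping device the authors use to decide whether the unique turning point of $\eta$ actually produces a sign change of $h'$ inside $(0,\infty)$ or gets pushed to the boundary: one compares $\eta(x^*)$ (the minimum value of $\eta$) against the threshold that separates BT from monotone increasing in the refined Glaser criterion, and the bound $a/\lambda$ is where that comparison flips. I would make this precise by writing $h'(x)/h(x) = \eta(x) \cdot[\text{something}] $ — more directly, use that $h' = 0 \iff \eta(x) = f(x)/\overline F(x) = h(x)$ in the usual way, and locate the root of $h'$ relative to $x^*$. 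If a root $x^{**}$ of $h'$ exists with $0 < x^{**}$, then since $\eta$ has the decreasing-then-increasing shape, $h$ is BT; if no such root exists, $h$ is increasing.

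Item 3 is the boundary case $ab = 1$: then $1 - ab = 0$, so $p_{a,b}(t) = t^a\bigl[(1+a)t^a + a^2(1+b)t + (a^2(1+b)+a(1-b)+2)\bigr]$, and the constant term of the bracket is, using $b = 1/a$, equal to $a^2 + a + 1 + a - 1 + 2 = \ldots$ — I would simplify it and check that the condition $a^2 + 3a - 1 > 0$ is exactly what forces that constant term (equivalently the relevant coefficient combination) to keep the bracket from having a positive root via Descartes, reducing Item 3 to the same one-sign-change analysis as Item 2, after which the 3(a)/3(b) dichotomy is handled verbatim as in 2(a)/2(b).

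The main obstacle I anticipate is Item 2's case split: verifying rigorously that the threshold $a/\lambda$ is the correct cutoff — i.e.\ proving the equivalence ``$h$ is BT $\iff$ $\exists\, 0<x^*<a/\lambda$ with $h'(x^*)=0$'' — requires either an explicit monotonicity/convexity estimate on $\eta$ near that point or a careful application of the refined Glaser lemma (the version distinguishing BT from increasing when $f(0^+)=0$), and getting the inequality direction and the strictness right is the delicate part. A secondary nuisance is the $a=1$ special case, where $p_{a,b}$ has degree $a+1 = 2$ rather than $2a$ and the exponents $t^{2a}, t^{a+1}, t^a$ collapse/reorder, so the Descartes count must be redone separately (though it still yields one sign change). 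Everything else is routine calculus and sign-chasing.
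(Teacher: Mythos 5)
Your proposal matches the paper's proof essentially step for step: Item 1 via $p_{a,b}>0 \Rightarrow \eta'>0$ and Glaser's lemma; Item 2 via Descartes' rule of signs giving the coefficient pattern $+,+,+,-$, hence a single positive root of $p_{a,b}$, so that $\eta$ decreases and then increases, with the limits $\lim_{x\to 0^+}\eta(x)=\infty$ and $\lim_{x\to\infty}\eta(x)=a/\lambda$; and Item 3 reduced to the same analysis. The ``main obstacle'' you anticipate is not actually an issue in the paper's treatment: the existence or non-existence of a zero of $h'$ is taken as a \emph{hypothesis} in items 2(a)/2(b), so nothing about the cutoff $a/\lambda$ needs to be proved --- the paper simply invokes Glaser's (1980) dichotomy for a bathtub-shaped $\eta$, with $a/\lambda$ entering only as the limiting value of $\eta$ at infinity.
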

\begin{proof}
For $(a,b)\in \mathcal{B}$, then $p_{a,b}(t)>0$, $\forall t>0$. Hence, by
Equation \eqref{eta-derivative}, $\eta'(x)>0$ for all $x>0$. So, Item 1
holds (Glaser, 1980).
		
In what follows, we prove the statement in Item 2. Under the conditions imposed in  this one: $a^2(1+b)+a(1-b)+2\geqslant 0$ and $ab>1$, for $a>0$ integer;
by Descartes' rule of signs, the polynomial $p_{a,b}(t)$ in \eqref{polynomial-ab}
has one sign change (the sequence signs is $+, +, +, -$), thus meaning that this polynomial has a single positive root. Then, from Equation \eqref{eta-derivative},
it follows that $\eta'(x)$ has a single positive root, say $x^*$. Since $\lim_{x\to 0^+}\eta(x)=\infty$ and $\lim_{x\to \infty}\eta(x)=a/\lambda$,  we obtain
$\eta'(x) < 0$ for $x<x^*$ $(<a/\lambda)$, $\eta'(x^*)=0$,
and $\eta'(x)>0$ for $x>x^*$. Under the hypothesis (a) [respectively, hypothesis (b)], the hrf of $X$ has BT shape [respectively, is increasing] (Glaser, 1980).
		
The proof of Item 3 follows by using the same steps as in Item 2, so it is omitted.
\end{proof}

\subsection{Tail behavior of EOLLW}\label{Tail-EOLLW}

\begin{definition}\label{def-tail}
	Let $F$ be a continuous univariate distribution on $\mathbf{R}$, and  $\overline{F}(x)=1-F(x)$.
	\begin{enumerate}
		\item
		The distribution $F$ has least light-tail distribution if, for any $t > 0$,
		\begin{align*}
		\lim_{x\to -\infty} {\exp(tx)\over F(x)}=\infty.
		\end{align*}
		\item
		The distribution F has upper light-tail distribution if, for any $t > 0$,
		\begin{align*}
		\lim_{x\to \infty} {\exp(-tx)\over \overline{F}(x)}=\infty.
		\end{align*}
		\item
		The distribution F has least heavy-tail distribution if, for any $t > 0$,
		\begin{align*}
		\lim_{x\to -\infty} {\exp(tx)\over F(x)}=0.
		\end{align*}
		\item
		The distribution F has upper heavy-tail distribution if, for any $t > 0$,
		\begin{align*}
		\lim_{x\to \infty} {\exp(-tx)\over \overline{F}(x)}=0.
		\end{align*}
	\end{enumerate}
\end{definition}

In this subsection, we prove that the EOLLW model has a transition from heavy-tailed distributions to light-tailed (Figure \ref{transition-tail}).
\begin{figure}[H]
	\centering	
	\includegraphics[width=0.43\textwidth]
	{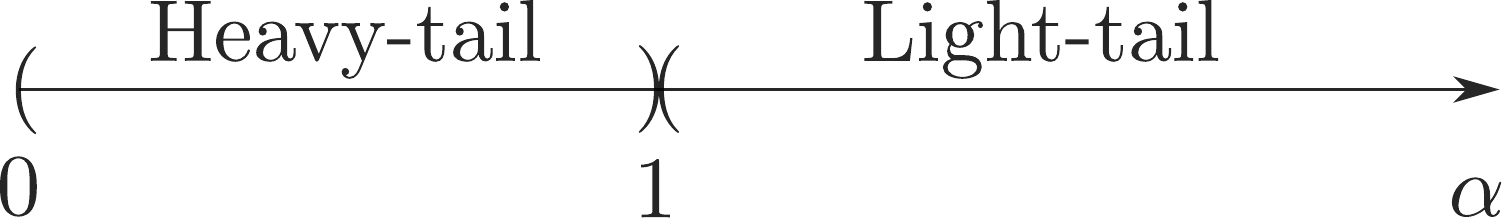}
	\caption{Heavy-tailedness and light-tailedness for the EOLLW model.}
	\label{transition-tail}
\end{figure}

\begin{proposition}
	The shape parameter $\alpha$ governs the tail
	behavior of the {\rm EOLLW} distribution type of the following form:
	\begin{enumerate}
		\item[(a)]
		If $\alpha<1$ then the {\rm EOLLW} has upper heavy-tail distribution.
		\item[(b)]
		If $\alpha=1$ then the limit
		\begin{align}\label{limit-dep-t}
		\lim_{x\to \infty}
		{
			\exp(-tx)
			\over
			\overline{F}(x)
		} \
		\text{depends on} \ t>0.
		\end{align}
		\item[(c)]
		If $\alpha>1$ then the {\rm EOLLW} has upper light-tail distribution.
	\end{enumerate}
\end{proposition}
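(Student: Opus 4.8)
We want to understand the behaviour of $\exp(-tx)/\overline{F}(x)$ as $x\to\infty$ for $t>0$, where $\overline{F}=1-F$. The plan is to first obtain a sharp asymptotic expression for $\overline{F}$ at $+\infty$. By Property (P5) together with the Dagum Type I cdf $\mathbbm{P}[D\leqslant d]=(1+d^{-a})^{-b}$ (equivalently, from the known {\rm EOLL}-$G$ cdf $F=\{G^{a}/[G^{a}+(1-G)^{a}]\}^{b}$), the {\rm EOLLW} cdf is $F(x)=\mathbbm{P}[D\leqslant T(x)]=\{1+T(x)^{-a}\}^{-b}$, hence
\[
\overline{F}(x)=1-\{1+T(x)^{-a}\}^{-b},\qquad T(x)=\exp\{(x/\lambda)^\alpha\}-1 ,
\]
with $T$ as in \eqref{T-EOLLW}. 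Since $\alpha,\lambda>0$ we have $T(x)\to\infty$, so $T(x)^{-a}\to 0^+$, and combining the elementary expansions $1-(1+u)^{-b}=bu+O(u^2)$ as $u\to 0^+$ and $(\exp(w)-1)^{-a}=\exp(-aw)\,[1-\exp(-w)]^{-a}\sim\exp(-aw)$ as $w\to\infty$ yields
\[
\overline{F}(x)\ \sim\ b\,\exp\{-a\,(x/\lambda)^\alpha\}\qquad (x\to\infty).
\]

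Consequently, for every fixed $t>0$,
\[
\frac{\exp(-tx)}{\overline{F}(x)}\ \sim\ \frac{1}{b}\,\exp\{\varphi(x)\},\qquad \varphi(x):=a\,(x/\lambda)^\alpha-tx ,
\]
so the proposition reduces entirely to the behaviour of the exponent $\varphi(x)$ as $x\to\infty$. If $\alpha<1$, the linear term dominates the power term, so $\varphi(x)\to-\infty$ and the limit equals $0$ for all $t>0$; by the definition of an upper heavy-tail distribution in Definition~\ref{def-tail}, this gives part (a). If $\alpha>1$, the power term dominates and $\varphi(x)\to+\infty$, so the limit equals $+\infty$ for all $t>0$; by the definition of an upper light-tail distribution, this gives part (c). If $\alpha=1$, then $\varphi(x)=(a/\lambda-t)x$, so the limit in \eqref{limit-dep-t} equals $+\infty$ when $t<a/\lambda$, equals $0$ when $t>a/\lambda$, and equals $1/b$ when $t=a/\lambda$; in particular it depends on $t$, which is part (b).

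The only step that requires genuine care is the asymptotic equivalence $\overline{F}(x)\sim b\exp\{-a(x/\lambda)^\alpha\}$: one must legitimately chain the two expansions above and verify that the multiplicative constant $b$ and the bounded correction factors $[1-\exp(-w)]^{-a}$ and $1+O(u)$ play no role in the conclusion. They indeed do not, because in each of the three regimes the exponent $\varphi(x)$ either diverges to $-\infty$, diverges to $+\infty$, or is linear in $x$ with a $t$-dependent slope, so it alone dictates whether the limit is $0$, $+\infty$, or genuinely $t$-dependent. Everything else is a routine limit computation.
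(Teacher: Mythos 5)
Your proof is correct, and it takes a genuinely different (and in some ways cleaner) route than the paper's. The paper starts from the same point --- Property (P5) with the Dagum cdf, so $\overline{F}(x)=1-[1+T(x)^{-a}]^{-b}$ --- but then applies L'H\^opital's rule directly to the ratio $\exp(-tx)/\overline{F}(x)$ and analyses the resulting quotient
$\{\exp[(x/\lambda)^\alpha]-1\}^{a+1}\big/\big[(x/\lambda)^{\alpha-1}\exp[(x/\lambda)^\alpha+tx]\big]$
case by case. You instead extract the explicit tail equivalent $\overline{F}(x)\sim b\exp\{-a(x/\lambda)^\alpha\}$ by chaining the expansions $1-(1+u)^{-b}=bu(1+O(u))$ and $T(x)^{-a}\sim\exp\{-a(x/\lambda)^\alpha\}$, after which the whole proposition reduces to the sign of the exponent $a(x/\lambda)^\alpha-tx$. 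Both arguments are valid and yield identical conclusions, including the same threshold $t=a/\lambda$ and the same limiting value $1/b$ at $t=a/\lambda$ in the case $\alpha=1$ (the paper's L'H\^opital computation gives $\lambda t/(ab)=1/b$ there, matching yours). What your approach buys is a sharper statement --- the exact first-order asymptotics of the survival function, which makes part (b) fully explicit rather than merely ``the limit depends on $t$'' --- and it avoids the differentiability bookkeeping of L'H\^opital. What the paper's approach buys is that it sidesteps any need to justify the composition of asymptotic expansions, which you correctly identify as the one step in your argument requiring care and which you do justify adequately.
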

\begin{proof}
	By Property (P5), $F(x)=\mathbbm{P}[D\leqslant T(x)]\eqqcolon F_D[T(x)]$, where $D\sim {\rm DAGUM}(a,1,b)$ (Dagum distribution Type I) and $T$ is as in \eqref{T-EOLLW}.
	Moreover, it is well-known that
	$F_D[d]=(1+d^{-a})^{-b}$.
	Then, for any $t>0$, we have
	\begin{align}\label{ineq-in}
	\lim_{x\to \infty}
	{
		\exp(-tx)
		\over
		\overline{F}(x)
	}
	=
	\lim_{x\to \infty}
	{
		\exp(-tx)
		\over
		\overline{F}_D[T(x)]
	}
	&=
	\lim_{x\to \infty}
	{
		\exp(-tx)
		\over
		1-[1+T(x)^{-a}]^{-b}
	}
	\eqqcolon
	L.
	\end{align}
	Since $T(x)=\exp[(x/\lambda)^\alpha]-1$, L'Hôpital's rule gives
	\begin{align}\label{hospital}
	L
	&=
	\lim_{x\to \infty}
	{
		t\exp(-tx)
		\over
		ab[1+T(x)^{-a}]^{-b-1} T(x)^{-a-1}T'(x)
	}
	\nonumber
	\\[0,2cm]
	&=
	{\lambda t\over ab\alpha}
	\lim_{x\to \infty}
	{
		[1+T(x)^{-a}]^{b+1} \,
		{\{\exp[({x\over\lambda})^\alpha]-1\}^{a+1} \over ({x\over\lambda})^{\alpha-1}
			\exp[({x\over\lambda})^\alpha+tx]}
	}.
	\end{align}
	But, 	
	\begin{align}\label{limt-three}
	\lim_{x\to \infty}T(x)=\infty
	\quad \text{and} \quad
	\lim_{x\to \infty}
	{
		{\{\exp[({x\over\lambda})^\alpha]-1\}^{a+1} \over ({x\over\lambda})^{\alpha-1}
			\exp[({x\over\lambda})^\alpha+tx]}
	}=
	\begin{cases}
	\infty, & \alpha>1;
	\\
	\infty, & \alpha=1, \, a>\lambda t;
	\\
	0, & \alpha=1, \, a<\lambda t;
	\\
	1, & \alpha=1, \, a=\lambda t;
	\\
	0, & \alpha<1.
	\end{cases}
	\end{align}
	By combining \eqref{limt-three} and \eqref{hospital}  with \eqref{ineq-in}, we obtain that (for any $t>0$)
	\begin{align*}
	\lim_{x\to \infty}
	{
		\exp(-tx)
		\over
		\overline{F}(x)
	}
	=
	\begin{cases}
	\infty, & \alpha>1;
	\\
	0, & \alpha<1;
	\end{cases}
	\end{align*}
	and that, for $\alpha=1$, the limit in \eqref{limit-dep-t} is a function of $t>0$.
	This completes the proof.
\end{proof}


\section{The LEOLLW distribution}\label{sec:prop_log_eollw}
If $X\sim \mbox{EOLLW}(\alpha,\lambda,a,b)$, $Y=\log(X)$ has the LEOLLW density
written in terms of $\mu=\log(\lambda)$ and $\sigma=1/\alpha$ (for $y\in \mathbf{R}$)
\begin{eqnarray}\label{fy}
f(y)=
\frac{ab \exp\left\{\left(\frac{y-\mu}{\sigma}\right)-a\exp\left(\frac{y-\mu}{\sigma}\right)\right\}
\left[1-\exp\left\{-\exp\left(\frac{y-\mu}{\sigma}\right)\right\}\right]^{ab-1}  }
{\sigma\left\{\left[1-\exp\left\{-\exp\left(\frac{y-\mu}{\sigma}\right)\right\}\right]^a+\exp\left[-a\exp\left(\frac{y-\mu}{\sigma}\right)\right]\right\}^{b+1}},
\end{eqnarray}
where $a > 0$, $b>0$, $\sigma > 0$ and $\mu \in \mathbf{R}$.

Henceforth, let $Y \sim \mbox{LEOLLW}(\mu,\sigma,a,b)$ be a
random variable with pdf \eqref{fy}. Plots of the pdf of $Y$ displayed
in Figure \ref{fdp_y} reveal great flexibility of the new density.
\begin{figure}[h]
\centering		
(a)\hspace{4cm} (b) \hspace{4cm} (c)
\includegraphics[width=1\textwidth]{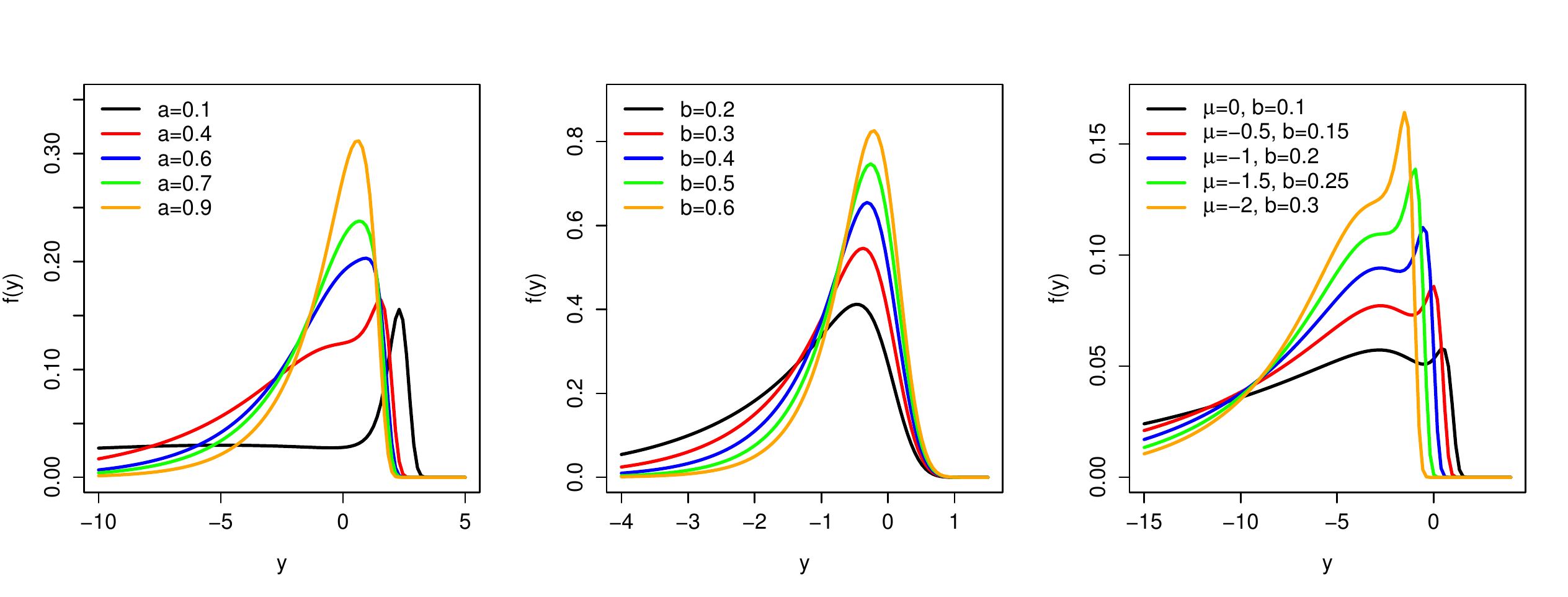}
\caption{Plots of the LEOLLW pdf. (a) For $\mu=1$, $\sigma=0.5$ and $b=0.3$. (b) For $\mu=0$, $\sigma=0.5$ and $a=0.5$. (c) For $\sigma=0.5$ and $a=0.4$.}
\label{fdp_y}
\end{figure}

The survival function of $Y$ is
\begin{eqnarray}\label{survival}
S(y) = 1- \frac{\left[1-\exp\left\{-\exp\left(\frac{y-\mu}{\sigma}\right)\right\}\right]^{ab}}
{\left\{\left[1-\exp\left\{-\exp\left(\frac{y-\mu}{\sigma}\right)\right\}\right]^a+\exp\left[-a\exp\left(\frac{y-\mu}{\sigma}\right)\right]\right\}^{b}}.
\end{eqnarray}

The pdf of $Z=(Y-\mu)/\sigma$ has the form
\begin{eqnarray}\label{densZ}
\pi(z)
=
\frac{
ab \exp\left\{z-a\exp\left(z\right)\right\}
\left[1-\exp\left\{-\exp\left(z\right)\right\}\right]^{ab-1}
}
{
\big\{\left[1-\exp\left\{-\exp\left(z\right)\right\}\right]^a+\exp\left[-a\exp\left(z\right)\right]\big\}^{b+1}
},
\quad z \in \mathbf{R}.
\end{eqnarray}

Some properties of $Y$ are reported below:
\begin{itemize}
\item[(PL1)]
Equation (\ref{fy}) gives $\lim_{y\to\infty} f(y)=0$. Moreover,
since  $f(y)=\pi\big((y-\mu)/\sigma\big)/\sigma$ and $\pi\big((y-\mu)/\sigma\big)=\exp[(y-\mu)/\sigma]\,f(\exp[(y-\mu)/\sigma])$,
where $f(x)$ is the pdf \eqref{pdf_eollw} with $\alpha=\lambda=1$.
By \eqref{lim-exp-weibull} and the L'Hôpital's rule, we obtain
\begin{align*}
\lim_{y\to-\infty} f(y)
=
{1\over\sigma}
\lim_{y\to-\infty}
\exp\biggl({y-\mu\over\sigma}\biggr) f\biggl(\exp\Big({y-\mu\over\sigma}\Big)\biggr)
=0.
\end{align*}
\item[(PL2)]
By Property (P5), the stochastic representation for $Y$ holds
\begin{align*}
Y=\mu+\sigma\log\bigl[\log(1+D)\bigr],
\quad
D\sim {\rm DAGUM}(a,1,b).
\end{align*}

\item[(PL3)]
Let $\mu=0$ and $\sigma=1$. A simple calculation gives
\begin{align*}
\pi'(z)
=
\pi(z)
\left\{
1+\widetilde{\omega}\left[
1-{\mathbbm{H}_0(\widetilde{\omega})\over 1-\exp\left(-\widetilde{\omega}\right)}
\right]
\right\},
\quad \widetilde{\omega}=\exp(z),
\end{align*}
where $\mathbbm{H}_0$ is as in \eqref{G-H-def}.
Then, the critical points of the pdf of $Y$ are the roots of:
\begin{align*}
\biggl({1\over\widetilde{\omega}}+1\biggr)[1-\exp\left(-\widetilde{\omega}\right)]
=
\mathbbm{H}_0(\widetilde{\omega}).
\end{align*}
\item[(PL4)]
Note that the function $\pi$ can be expressed as
\begin{eqnarray}\label{densZ-1}
\pi(z)
=
\frac{
b \left[1-F_G(-z)\right]^{ab-1} f_G(-z-\log(a))
}
{
\big\{\left[1-F_G(-z)\right]^a + F^a_G(-z)\big\}^{b+1}
},
\quad  z \in \mathbf{R},
\end{eqnarray}
where $F_G(x)=\exp[-\exp(-x)]$ is the standard Gumbel distribution and
$f_G(x)=\exp\{-[x+\exp(-x)]\}$, $x \in \mathbf{R}$, is the associated pdf.
The extreme-value standard distribution (Gumbel law) is a special case of (\ref{densZ}) when $a=b=1$.
\end{itemize}

\subsection{Another representation for the LEOLLW distribution}

First, we define the pdf $g(u)$ as follows
\begin{eqnarray}\label{densZ-2}
g(u)
=
\frac{
b \left[1-F_G(-S(u))\right]^{ab-1}
}
{
\big\{\left[1-F_G(-S(u))\right]^a + F^a_G(-S(u))\big\}^{b+1}
},
\quad 0 < u < 1,
\end{eqnarray}
where $S(u)=-F_G^{-1}(1-u)-\log(a)$ and $F_G^{-1}$ denotes the inverse of the
standard Gumbel cdf. For $a=b=1$, $g(u)$ reduces to the continuous uniform
distribution in the interval $(0,1)$. The cdf corresponding to $g(u)$ is
\begin{eqnarray*}
G(u)
=
\frac{
\left[1-F_G(-S(u))\right]^{ab}
}
{
\big\{\left[1-F_G(-S(u))\right]^a + F^a_G(-S(u))\big\}^{b}
}.
\end{eqnarray*}
	
Second, let $A:\mathbf{R}\to (0,1)$ be an one-to-one and monotone transformation.
So, we define the following cdf
\begin{align*}
F(z)=\int_{0}^{A(z)}g(u){\rm d}u=G(A(z)).
\end{align*}
Setting $A(z)=1-F_G(-z-\log(a))$, where $F_G$ is the standard Gumbel distribution,
we have
$F(z)=G\big(1-F_G(-z-\log(a))\big)$,
and
\begin{align*}
F'(z)=\pi(z), \quad  z \in \mathbf{R},
\end{align*}
where $\pi(z)$ is as in \eqref{densZ-1}.
	
The previous arguments show the following result since $F_G^{-1}(p)=-\log[-\log(p)]$ (for $0<p<1$).

\begin{proposition}
The random variable $Z \sim {\rm LEOLLW}(0,1,a,b)$ has the stochastic representation:
\begin{align*}
Z=-F_G^{-1}(1-U)-\log(a)=\log[-\log(1-U)]-\log(a),
\end{align*}
where $U$ is distributed according to \eqref{densZ-2}.
\end{proposition}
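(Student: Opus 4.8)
The plan is to recognize the claimed identity as an inverse‑transform (quantile) representation built on the auxiliary construction $F(z)=G(A(z))$ introduced just above the statement. Recall that $A:\mathbf{R}\to(0,1)$ with $A(z)=1-F_G(-z-\log(a))$ was chosen to be one‑to‑one and monotone, and that it was already verified there that $F'(z)=\pi(z)$, the pdf of $Z\sim{\rm LEOLLW}(0,1,a,b)$ in \eqref{densZ-1}. Hence it suffices to show that if $U$ has density $g$ in \eqref{densZ-2} (equivalently, cdf $G$), then $A^{-1}(U)$ has cdf $F$, and then to compute $A^{-1}$ explicitly.

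First I would check that $A$ is a strictly increasing bijection of $\mathbf{R}$ onto $(0,1)$: since $z\mapsto -z-\log(a)$ is strictly decreasing and $F_G$ is a strictly increasing cdf with range $(0,1)$, the map $z\mapsto F_G(-z-\log(a))$ is strictly decreasing with range $(0,1)$, so $A(z)=1-F_G(-z-\log(a))$ is strictly increasing with $\lim_{z\to-\infty}A(z)=0$ and $\lim_{z\to\infty}A(z)=1$. Therefore $A^{-1}$ is well defined and strictly increasing on $(0,1)$.

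Next, set $Z=A^{-1}(U)$. Using the monotonicity of $A$,
\begin{align*}
\mathbbm{P}[Z\leqslant z]
=\mathbbm{P}[A^{-1}(U)\leqslant z]
=\mathbbm{P}[U\leqslant A(z)]
=G(A(z))
=F(z),
\end{align*}
so $Z$ has cdf $F$, and hence density $F'=\pi$; that is, $Z\sim{\rm LEOLLW}(0,1,a,b)$. It then remains to invert $A$: from $u=1-F_G(-z-\log(a))$ we get $F_G(-z-\log(a))=1-u$, i.e. $-z-\log(a)=F_G^{-1}(1-u)$, whence $z=-F_G^{-1}(1-u)-\log(a)$. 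Finally, substituting the standard Gumbel quantile function $F_G^{-1}(p)=-\log[-\log(p)]$ (valid for $0<p<1$) yields $z=\log[-\log(1-u)]-\log(a)$, which is the asserted representation.

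There is essentially no serious obstacle: the argument is the standard inverse‑transform identity applied to the already‑established factorization $F=G\circ A$. The only points requiring (routine) care are confirming that $A$ is a genuine increasing bijection of $\mathbf{R}$ onto $(0,1)$, so that $A^{-1}$ exists and the event manipulations above are exact, and correctly carrying the two minus signs and the $\log(a)$ shift through the inversion; the second displayed equality then follows by plugging in the closed form of the Gumbel quantile function.
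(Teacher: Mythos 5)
Your proof is correct and follows essentially the same route as the paper: it uses the factorization $F=G\circ A$ with $A(z)=1-F_G(-z-\log(a))$ established just before the statement, applies the inverse-transform identity, and inverts $A$ via the Gumbel quantile $F_G^{-1}(p)=-\log[-\log(p)]$. The only difference is that you make explicit the (routine) verification that $A$ is an increasing bijection of $\mathbf{R}$ onto $(0,1)$, which the paper leaves implicit.
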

As a consequence of the proposition above, we obtain
\begin{itemize}
\item
The random variable  $Y \sim {\rm LEOLLW}(\mu,\sigma,a,b)$ admits the stochastic representation
\begin{align*}
Y=\mu-\sigma\big[F_G^{-1}(1-U)+\log(a)\big]
=
\mu-\sigma\log(a)+\sigma\log[-\log(1-U)].
\end{align*}
\item
The random variable  $X\sim \mbox{EOLLW}(\alpha,\lambda,a,b)$ can be expressed as
\begin{align*}
X=\lambda\exp\left\{-{1\over\alpha}\,\big[F_G^{-1}(1-U)+\log(a)\big]\right\}
=
\lambda a^{-1/\alpha}
\big[-\log(1-U)\big]^{1/\alpha}.
\end{align*}
\end{itemize}

\subsection{Modality of the LEOLLW density}
Since $\pi(z)=\sigma f(y;\mu,\sigma,a,b)$, $y=\sigma z+\mu$,
the shape of the LEOLLW pdf $f(y)$ is uniquely determined by the shape
of the pdf $\pi(z)$. Then, for simplicity, we consider the analysis
of modality of $\pi(z)$, i.e., when $\mu=0$ and $\sigma=1$.
	
By property (PL3), a critical point of the LEOLLW density satisfies
the following equation:
\begin{align*}
\mathbbm{I}_0(\widetilde{\omega})=
\mathbbm{H}_0(\widetilde{\omega}),
\quad
\widetilde{\omega}=\exp(z),
\end{align*}
where
\begin{align*}
\mathbbm{I}_0(\widetilde{\omega})
=
\biggl({1\over \widetilde{\omega}}+1\biggr)[1-\exp\left(-\widetilde{\omega}\right)]
\end{align*}
and
$\mathbbm{H}_0$ is as in \eqref{G-H-def}.
The function $\mathbbm{H}_0$ was addressed in Section \ref{Modality-EOLLW}.
The function $\mathbbm{I}_0$ is unimodal with mode $\widetilde{\omega}\approx 1.79328$. Moreover, $\lim_{\widetilde{\omega}\to 0}\mathbbm{I}_0(\widetilde{\omega})=\lim_{\widetilde{\omega}\to \infty}\mathbbm{I}_0(\widetilde{\omega})=1$.
	
The following result shows that, regardless of the choice of the parameters,
a critical point of the LEOLLW pdf always exists.

\begin{proposition}
Equation
$\mathbbm{I}_0(\widetilde{\omega})=
\mathbbm{H}_0(\widetilde{\omega})$ has at least one root on $(0,\infty)$.
\end{proposition}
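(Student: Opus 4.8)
The plan is to reuse the strategy behind Proposition~\ref{prop-one-root}: set $\phi(\widetilde{\omega})=\mathbbm{I}_0(\widetilde{\omega})-\mathbbm{H}_0(\widetilde{\omega})$, observe that $\phi$ is continuous on $(0,\infty)$ (both $\mathbbm{I}_0$ and $\mathbbm{H}_0$ are elementary and finite there), and show that $\phi$ changes sign, so that the intermediate value theorem produces a zero, which via $\widetilde{\omega}=\exp(z)$ corresponds to a critical point of $\pi$.

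First I would collect the boundary values already recorded above. For $\mathbbm{I}_0$ one has $\lim_{\widetilde{\omega}\to 0^+}\mathbbm{I}_0(\widetilde{\omega})=\lim_{\widetilde{\omega}\to\infty}\mathbbm{I}_0(\widetilde{\omega})=1$. For $\mathbbm{H}_0$, from \eqref{G-H-def} and the limits stated right after \eqref{shape-H}, $\lim_{\widetilde{\omega}\to 0^+}\mathbbm{H}_0(\widetilde{\omega})=1-ab$ and $\lim_{\widetilde{\omega}\to\infty}\mathbbm{H}_0(\widetilde{\omega})=1+a$. Consequently
\[
\lim_{\widetilde{\omega}\to 0^+}\phi(\widetilde{\omega})=ab>0
\qquad\text{and}\qquad
\lim_{\widetilde{\omega}\to\infty}\phi(\widetilde{\omega})=-a<0,
\]
the strict signs being guaranteed by $a>0$ and $b>0$.

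Then I would finish as follows: choose $\widetilde{\omega}_1>0$ small enough that $\phi(\widetilde{\omega}_1)>0$ and $\widetilde{\omega}_2>\widetilde{\omega}_1$ large enough that $\phi(\widetilde{\omega}_2)<0$; since $\phi$ is continuous on the compact interval $[\widetilde{\omega}_1,\widetilde{\omega}_2]$, the intermediate value theorem yields $\widetilde{\omega}^*\in(\widetilde{\omega}_1,\widetilde{\omega}_2)$ with $\phi(\widetilde{\omega}^*)=0$, i.e.\ $\mathbbm{I}_0(\widetilde{\omega}^*)=\mathbbm{H}_0(\widetilde{\omega}^*)$.

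There is essentially no obstacle here — the proof is structurally identical to that of Proposition~\ref{prop-one-root}, the only point worth stressing being that the sign change is unconditional in the parameters (unlike in Proposition~\ref{prop-one-root}, where the restriction $ab>1/\alpha$ was needed, the term $-1/\alpha$ being replaced here by the derivative-of-$\log$ contribution that makes $\mathbbm{I}_0\to 1$ at the origin). The unimodality of $\mathbbm{I}_0$, with interior maximum strictly above $1$, is not needed for mere existence; it would only matter if one wanted to bound or count the critical points of the LEOLLW density.
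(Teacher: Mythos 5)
Your proof is correct and follows essentially the same route as the paper: compute the boundary limits of $\mathbbm{I}_0-\mathbbm{H}_0$ at $0^+$ and at $\infty$, observe the sign change, and invoke the intermediate value theorem. The only discrepancy is that the paper states the limit at the origin as $a$ while you obtain $ab$; your value is the one consistent with $\lim_{\widetilde{\omega}\to 0^+}\mathbbm{I}_0(\widetilde{\omega})=1$ and $\lim_{\widetilde{\omega}\to 0^+}\mathbbm{H}_0(\widetilde{\omega})=1-ab$, and since both $a$ and $ab$ are positive the argument is unaffected either way.
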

\begin{proof}
Since $\lim_{\widetilde{\omega}\to 0^+}\{\mathbbm{I}_0(\widetilde{\omega})-\mathbbm{H}_0(\widetilde{\omega})\}=a$
and $\lim_{\widetilde{\omega}\to \infty}\{\mathbbm{I}_0(\widetilde{\omega})-\mathbbm{H}_0(\widetilde{\omega})\}=-a$,
the proof follows by using the intermediate value theorem.
\end{proof}
	
Since $\mathbbm{I}_0$ is unimodal and $\mathbbm{H}_0$ is strictly concave and increasing for $0<a\leqslant 1$, see \eqref{shape-H}, and  $\mathbbm{G}_0-\mathbbm{H}_0$ is not a periodic function, it is natural
to expect that Equation $\mathbbm{I}_0(\widetilde{\omega})
=\mathbbm{H}_0(\widetilde{\omega})$ has at most three positive roots (but we do not have a proof). In what remains in this section, we suppose that $\mathbbm{I}_0-\mathbbm{H}_0$ has at most three zeros. Then, we have the
following possible scenarios:
\begin{itemize}
\item
If $\mathbbm{I}_0$ and $\mathbbm{H}_0$ do not have a point of intersection then the LEOLLW pdf has no critical points. Property (PL1) implies that the pdf is the
zero function, which is absurd, so this scenario cannot occur.
\item
If $\mathbbm{I}_0$ and $\mathbbm{H}_0$ have a single point of intersection, then
the LEOLLW pdf has a unique critical point. By Property (PL1), it follows
that the {\rm EOLLW} pdf is unimodal.
\item
If $\mathbbm{I}_0$ and $\mathbbm{H}_0$ do have two points of intersection,
then the LEOLLW pdf has two critical points. But, this is an absurd by Property (PL1),
so this scenario cannot occur either.
\item
Further, if $\mathbbm{I}_0$ and $\mathbbm{H}_0$ have three points of intersection,
then the {\rm EOLLW} pdf has three critical points, say $z_0,z_1,z_2$. Suppose that $z_0<z_1<z_2$. Again, Property (PL1) gives that $z_0$ and $z_2$ are maximum points
and $z_1$ is a minimum point of the {\rm EOLLW} pdf, thus ensuring the bimodality.
\end{itemize}
	
Hence, we have established the following result:
\begin{theorem}\label{teo-modality}
If $Z \sim{\rm LEOLLW}(0,1,a,b)$ with $0<a\leqslant 1$,
then the pdf \eqref{densZ} of $Z$ is uni- or bimodal.
\end{theorem}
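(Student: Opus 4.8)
The plan is to reduce the problem, via Property (PL3), to counting the positive solutions of the critical-point equation $\mathbbm{I}_0(\widetilde{\omega})=\mathbbm{H}_0(\widetilde{\omega})$, where $\widetilde{\omega}=\exp(z)$. Since $z\mapsto\exp(z)$ is an increasing bijection from $\mathbf{R}$ onto $(0,\infty)$, the critical points of $\pi$ correspond one-to-one with the zeros of $\mathbbm{I}_0-\mathbbm{H}_0$ on $(0,\infty)$, and whether such a point is a local maximum or minimum is dictated by the sign with which $\mathbbm{I}_0-\mathbbm{H}_0$ crosses zero there (read off from the formula for $\pi'$ in (PL3)). Moreover, because $\pi(z)=\sigma f(y)$ with $y=\sigma z+\mu$, it suffices to treat $\mu=0$, $\sigma=1$, so the shape of the LEOLLW density is entirely captured by that of $\pi$.

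Next I would record the qualitative behaviour of the two competing curves. Differentiating $\mathbbm{I}_0(\widetilde{\omega})=(1/\widetilde{\omega}+1)[1-\exp(-\widetilde{\omega})]$ yields $\mathbbm{I}_0'(\widetilde{\omega})=\{\exp(-\widetilde{\omega})(\widetilde{\omega}^2+\widetilde{\omega}+1)-1\}/\widetilde{\omega}^2$, whose numerator vanishes exactly when $\widetilde{\omega}^2+\widetilde{\omega}+1=\exp(\widetilde{\omega})$; a short convexity argument applied to $\exp(\widetilde{\omega})-\widetilde{\omega}^2-\widetilde{\omega}-1$ (which is $0$ with zero derivative at the origin, then dips below $0$ and returns) shows this has a unique positive root $\widetilde{\omega}^{*}\approx1.79328$, with the numerator positive before it and negative after. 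Hence $\mathbbm{I}_0$ is unimodal on $(0,\infty)$ with $\mathbbm{I}_0(0^+)=\mathbbm{I}_0(\infty)=1$. On the other hand, for $0<a\leqslant1$ the function $\mathbbm{H}_0$ is strictly increasing and strictly concave by \eqref{shape-H}, with $\mathbbm{H}_0(0^+)=1-ab<1$ and $\mathbbm{H}_0(\infty)=1+a>1$. In particular $\mathbbm{I}_0-\mathbbm{H}_0\to a>0$ as $\widetilde{\omega}\to0^+$ and $\to-a<0$ as $\widetilde{\omega}\to\infty$, which re-derives via the intermediate value theorem the existence of at least one critical point (the Proposition just before the theorem).

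The decisive input is that $\mathbbm{I}_0-\mathbbm{H}_0$ has at most three zeros on $(0,\infty)$; this is the step I expect to be the main obstacle, and, as in the excerpt, I would take it as a structural hypothesis (a rigorous argument would presumably rest on a Rolle- or Rouch\'e-type count, exploiting that $\mathbbm{I}_0$ is unimodal, $\mathbbm{H}_0$ is concave and increasing, and their difference is not periodic). Granting it, the proof is a finite case analysis over the number of intersection points of $\mathbbm{I}_0$ and $\mathbbm{H}_0$, combined with the boundary data from (PL1): $\pi$ is strictly positive on $\mathbf{R}$ with $\pi(z)\to0$ as $z\to\pm\infty$. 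Thus $\pi$ cannot be monotone, so $0$ intersections is impossible; a strictly positive density vanishing at both ends must have its extreme critical points be local maxima, forcing an odd number of critical points, which rules out $2$ intersections; exactly $1$ intersection gives a single interior maximum, i.e.\ unimodality; and exactly $3$ intersections gives the pattern maximum--minimum--maximum, i.e.\ bimodality. Since these are the only admissible cases, $\pi$, and hence the LEOLLW density, is uni- or bimodal.
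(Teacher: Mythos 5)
Your proposal is correct and follows essentially the same route as the paper: reduce via (PL3) to counting intersections of $\mathbbm{I}_0$ and $\mathbbm{H}_0$, take the ``at most three zeros'' bound as a structural hypothesis (which the paper also does, admitting it has no proof), and conclude by a case analysis on the number of intersections using the boundary behaviour from (PL1). Your parity argument for excluding the two-intersection case is a slightly more explicit version of the paper's appeal to absurdity, but the substance is identical.
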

	
It is well-known that the standard Gumbel distribution (LEOLLW with $a=b=1$)
is unimodal, which is compatible with Theorem  \ref{teo-modality}.

\subsection{Tail behavior of LEOLLW}	
In this subsection, by following Definition \ref{def-tail}, we prove that the LEOLLW model has upper light-tail distribution, but the lower tail does not have a well-defined behavior when it is compared with the tail of an exponential distribution. That is, the lower tail of LEOLLW is neither least light-tail nor least heavy-tail.
\begin{proposition}
	The {\rm LEOLLW} has upper light-tail distribution and  the limit
	\begin{align}\label{limit-to-prove}
	\lim_{y\to -\infty} {\exp(ty)\over 1-S(y)} \
	\text{depends on} \ t>0.
	\end{align}
	Here, $S(y)$ is the survival function of $Y$ defined in
	\eqref{survival}.
\end{proposition}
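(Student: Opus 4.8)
The statement has two parts: (i) $Y$ has upper light-tail distribution, i.e. $\lim_{y\to\infty}\exp(-ty)/\overline{F}_Y(y)=\infty$ for every $t>0$, where $\overline{F}_Y(y)=S(y)$; and (ii) the lower-tail limit $\lim_{y\to-\infty}\exp(ty)/[1-S(y)]=\lim_{y\to-\infty}\exp(ty)/F_Y(y)$ genuinely depends on $t>0$, so that neither the least light-tail nor the least heavy-tail property of Definition \ref{def-tail} holds. The plan is to reduce both parts to the already-established EOLLW tail proposition via the relation $Y=\log X$ with $X\sim{\rm EOLLW}(\alpha,\lambda,a,b)$, $\mu=\log\lambda$, $\sigma=1/\alpha$; but since $Y$ is obtained from a \emph{fixed} $X$ with $\alpha=1/\sigma$ arbitrary, the cleanest route is a direct computation using the stochastic representation and L'Hôpital, mirroring the proof of the EOLLW proposition.

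For the upper tail, I would start from $F_Y(y)=\mathbbm{P}[D\le T(\exp(y))]=F_D\bigl[\exp((\exp(y)\cdot e^{-\mu})^{1/\sigma})-1\bigr]$ using $F_D[d]=(1+d^{-a})^{-b}$ from Property (P5), or equivalently write $S(y)$ directly from \eqref{survival}. As $y\to\infty$, $\exp((y-\mu)/\sigma)\to\infty$, so the bracket $[1-\exp\{-\exp((y-\mu)/\sigma)\}]\to 1$ and $\exp[-a\exp((y-\mu)/\sigma)]\to 0$; hence $S(y)\approx b\,\exp[-a\exp((y-\mu)/\sigma)]$ to leading order. Then $\exp(-ty)/S(y)\sim (1/b)\exp\{a\exp((y-\mu)/\sigma)-ty\}$, and since $\exp((y-\mu)/\sigma)$ grows faster than any linear function of $y$, the exponent tends to $+\infty$, giving the claim. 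Rigorously I would apply L'Hôpital once to $\exp(-ty)/S(y)$, differentiating numerator and denominator, and show the ratio diverges; the denominator's derivative is $-S'(y)=f(y)$, whose dominant factor is $\exp\{-a\exp((y-\mu)/\sigma)\}$ times a polynomial-exponential prefactor, so $\exp(-ty)/f(y)\to\infty$ by the same growth comparison.

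For the lower tail, I would use $F_Y(y)=\mathbbm{P}[D\le T(\exp(y))]$ with $T(\exp(y))=\exp[(\exp(y)/\lambda)^{1/\sigma}]-1\to 0$ as $y\to-\infty$, because $\exp(y)\to 0$. Using $T(\exp(y))^{-a}\to\infty$ and $F_D[d]=(1+d^{-a})^{-b}\approx d^{ab}$ for small $d$, I get $F_Y(y)\sim T(\exp(y))^{ab}\sim [(\exp(y)/\lambda)^{1/\sigma}]^{ab}=\lambda^{-ab/\sigma}\exp(ab\,y/\sigma)$ as $y\to-\infty$. Therefore $\exp(ty)/F_Y(y)\sim \lambda^{ab/\sigma}\exp\{(t-ab/\sigma)y\}$, and as $y\to-\infty$ this limit is $0$ if $t>ab/\sigma$, is $+\infty$ if $t<ab/\sigma$, and equals $\lambda^{ab/\sigma}$ if $t=ab/\sigma$. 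So the limit depends on $t$, establishing \eqref{limit-to-prove}; a clean L'Hôpital argument (differentiating $\exp(ty)$ over $F_Y(y)$, i.e. over $-[1-S(y)]'=-f(y)$, and tracking the $\exp((ab/\sigma-1)\cdot(y-\mu)/\sigma)$-type factor in $f$) pins down the three cases.

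The main obstacle is bookkeeping the prefactors in $f(y)$ and $S(y)$ near $\pm\infty$ carefully enough that the L'Hôpital ratios are seen to be ratios of an exponential-of-exponential (or exponential-of-linear) times an algebraic term, and then invoking the standard fact that $\exp(c\,e^{w})$ dominates $\exp(dw)$ at $+\infty$ (upper tail) while at $-\infty$ the comparison reduces to comparing two ordinary exponentials $\exp(ty)$ vs.\ $\exp((ab/\sigma)y)$ (lower tail). No genuinely new idea is needed beyond the technique already used for the EOLLW proposition; the work is in handling the bracketed Gumbel-type factors $[1-\exp\{-\exp((y-\mu)/\sigma)\}]^{ab-1}$ and the denominator $\{\cdots\}^{b+1}$, both of which converge to harmless nonzero constants ($1$ and $1$ respectively) in the relevant limit and so do not affect the leading asymptotics.
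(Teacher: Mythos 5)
Your proposal is correct and follows essentially the same route as the paper: both rest on the Dagum representation $F_Y(y)=F_D[T(\exp(y))]$ with $F_D[d]=(1+d^{-a})^{-b}$ from Property (P5), and both reduce the two tail limits to the asymptotics of $T(\exp(y))=\exp\{(\exp(y)/\lambda)^{1/\sigma}\}-1$ at $\pm\infty$. The only real difference is technical: the paper applies L'H\^{o}pital's rule and then changes variables, whereas you expand $S(y)\sim b\,T(\exp(y))^{-a}$ at $+\infty$ and $F_Y(y)\sim T(\exp(y))^{ab}\sim\lambda^{-ab/\sigma}\exp(ab\,y/\sigma)$ at $-\infty$ and compare ordinary exponentials directly. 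Your lower-tail conclusion is in fact sharper than the paper's: you identify the single threshold $t=ab/\sigma=ab\alpha$ at which the limit switches from $\infty$ to $0$ (with value $\lambda^{ab/\sigma}$ at the threshold, matching the paper's constant), whereas the paper's seven-case table, organized by the separate signs of $\alpha-t$ and $1-ab$, misstates the outcome in the two cases where those signs do not determine the sign of $t-\alpha ab$ (e.g.\ $\alpha=1$, $ab=2$, $t=3$ falls in the paper's ``$\alpha-t<0$, $1-ab<0$'' case labelled $\infty$, but the limit is actually $0$). Either way the limit depends on $t$, so the proposition stands; your argument, made rigorous by one application of L'H\^{o}pital or by justifying the asymptotic equivalence $1-(1+u)^{-b}\sim bu$ as $u\to 0$, is complete.
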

\begin{proof}
	By Definition \ref{def-tail}, to prove that the {\rm LEOLLW} model has upper light-tail distribution, it must be verified that
	(for any $t > 0$)
	\begin{align}\label{limit-obj}
	\lim_{y\to \infty} {\exp(-ty)\over S(y)}=\infty.
	\end{align}
	
	Indeed, by Property (P5), $F(y)=\mathbbm{P}[D\leqslant T(\exp(y))]\eqqcolon F_D[T(\exp(y))]$, where $D\sim {\rm DAGUM}(a,1,b)$ (Dagum distribution Type I), $T(x)=\exp\{(x/\lambda)^\alpha\}-1$ with $\mu=\log(\lambda)$ and $\sigma=1/\alpha$, and
	$F_D[d]=(1+d^{-a})^{-b}$. Then,
	for any $t>0$, we have
	\begin{align}\label{ineq-in-1}
	\lim_{y\to \infty}
	{
		\exp(-ty)
		\over
		S(y)
	}
	=
	\lim_{y\to \infty}
	{
		\exp(-ty)
		\over
		\overline{F}_D[T(\exp(y))]
	}
	&=
	\lim_{y\to \infty}
	{
		\exp(-ty)
		\over
		1-[1+T(\exp(y))^{-a}]^{-b}
	}
	\eqqcolon
	L_1.
	\end{align}
	L'Hôpital's rule gives
	\begin{align}\label{hospital-1}
	L_1
	&=
	\lim_{y\to \infty}
	{
		t\exp(-ty)
		\over
		ab[1+T(\exp(y))^{-a}]^{-b-1} T(\exp(y))^{-a-1}T'(\exp(y))\exp(y)
	}
	\nonumber
	\\[0,2cm]
	&=
	{\lambda t\over ab\alpha}
	\lim_{y\to \infty}
	{
		[1+T(\exp(y))^{-a}]^{b+1} \,
		{
			\big\{\exp\big[({\exp(y)\over\lambda})^\alpha\big]-1\big\}^{a+1}
			\over
			\big({\exp(y)\over\lambda}\big)^{\alpha-1}
			\exp\big[({\exp(y)\over\lambda})^\alpha+(t+1)y\big]
		}
	}.
	\end{align}
	But, by considering the change of variables $z=\exp(y)/\lambda$,
	\begin{align}\label{hospital-2}
	\lim_{y\to \infty}
	{\big\{\exp\big[({\exp(y)\over\lambda})^\alpha\big]-1\big\}^{a+1}
		\over \big({\exp(y)\over\lambda}\big)^{\alpha-1}
		\exp\big[({\exp(y)\over\lambda})^\alpha+(t+1)y\big]}
	=
	{1\over\lambda^{t+1}}
	\lim_{z\to \infty}
	{[\exp(z^\alpha)-1]^{a+1}\over z^{\alpha+t} \exp(z^\alpha)}=\infty.
	\end{align}
	Since $\lim_{y\to \infty}T(\exp(y))=\infty$,  by combining \eqref{hospital-2} and \eqref{hospital-1} with \eqref{ineq-in-1}, the limit in \eqref{limit-obj} follows. This proves the upper light-tailedness of LEOLLW distribution.
	
	On the other hand, similarly to the steps done previously, by L'Hôpital's rule,
	\begin{align*}
	\lim_{y\to -\infty} {\exp(ty)\over 1-S(y)}
	&=
	\lim_{y\to -\infty}
	{
		\exp(ty)
		\over
		[1+T(\exp(y))^{-a}]^{-b}
	}
	\\[0,2cm]
	&=
	{\lambda t\over ab\alpha}
	\lim_{x\to \infty}
	{
		\big\{	\big\{\exp\big[({\exp(-x)\over\lambda})^\alpha\big]-1\big\}^{a}+1 \big\}^{b+1}
		\big\{\exp\big[({\exp(-x)\over\lambda})^\alpha\big]-1\big\}^{1-ab}
		\over
		\big({\exp(-x)\over\lambda}\big)^{\alpha-1}
		\exp\big[({\exp(-x)\over\lambda})^\alpha+(t-1)x\big]	
	}
	\\[0,2cm]
	&\eqqcolon
	{\lambda t\over ab\alpha}\, L_2,
	\end{align*}
	where the changing variables $x=-y$ has been considered. Moreover, by taking the change of variables $w=\exp(-x)/\lambda$, we have
	\begin{align*}
	L_2
	&	=
	{1\over \lambda^{1-t}}
	\lim_{w\to 0}
	{
		\big\{	\big[\exp(w^\alpha)-1\big]^{a}+1 \big\}^{b+1}
		\over
		\exp(w^\alpha)
	}
	\,
	{
		\big[\exp(w^\alpha)-1\big]^{1-ab}
		\over
		w^{\alpha-t} 	
	}
	\\[0,2cm]
	&=
	\begin{cases}
	\infty, & \alpha-t>0, \ 1-ab\leqslant 0;
	\\
	0, & \alpha-t<0, \ 1-ab\geqslant 0;
	\\
	{1\over \lambda^{1-t}}, & \alpha-t=0, \ 1-ab= 0;
	\\
	\infty, & \alpha-t=0, \ 1-ab< 0;
	\\
	0, & \alpha-t=0, \ 1-ab> 0;
	\\
	\infty, & \alpha-t<0, \ 1-ab< 0;
	\\
	0, & \alpha-t>0, \ 1-ab> 0.
	\end{cases}
	\end{align*}
	Therefore, we conclude that the limit in \eqref{limit-to-prove} depends on the choice of $t>0$.
\end{proof}

\section{The LEOLLW regression with censored data}\label{sec:reg}

The LEOLLW regression model is defined by
\begin{eqnarray}
\label{log-linear}
y_i=\mu_i+\sigma_i z_i, \quad i=1,\ldots,n.
\end{eqnarray}
Here, the random error $z_i$ has density \eqref{densZ}, and the location $\mu_i$ and dispersion $\sigma_i$ are related to the explanatory v\-a\-ria\-ble vector $\vn_i^\top=(v_{i1},\ldots,v_{ip})$ (for $i=1,\ldots,n$) by
\begin{eqnarray}\label{reg_1}
\mu_{i}={\vn_i^\top}\,\betn_1 \qquad \text{and} \qquad \sigma_{i}= \exp ({\vn_i^\top}\,\betn_2),
\end{eqnarray}
where $\betn_1=(\beta_{10},\ldots,\beta_{1p})^\top$  and  $\betn_2=(\beta_{20},\ldots,\beta_{2p})^\top$ are $p \times 1$ vectors of
unkown parameters.

Setting $\mun=(\mu_1,\ldots,\mu_n)^\top$,
$\boldsymbol{\sigma}=(\sigma_1,\ldots,\sigma_n)^\top$,
and $\Vn=(\vn_1, \ldots, \vn_n)^\top$ for a known matrix, we can write $\mun=\Vn\betn_{1}$ and $\boldsymbol{\eta}=\exp(\boldsymbol{\sigma})=\Vn\betn_{2}$,
where $a$ and $b$ denote shape parameters for the regression.

Equation \eqref{log-linear} gives the log-odd log-logistic Weibull (LOLLW)
regression for $a=1$, log-exponentiated Weibull (LEW) regression for $b=1$,
and  log-Weibull (LW) regression for $a=b=1$.

The survival function of $Y_i|\vn$ follows from Equations \eqref{log-linear} and \eqref{reg_1}
\begin{eqnarray}\label{Sy_x}
S(y_i|\vn)= 1- \frac{\left[1-\exp\left\{-\exp\left(\frac{y_i-\mu_{i}}{\sigma_i}\right)\right\}\right]^{a\,b}}
{\left\{\left[1-\exp\left\{-\exp\left(\frac{y_i-\mu_{i}}{\sigma_i}\right)\right\}\right]^a+\exp\left[-a\exp\left(\frac{y_i-\mu_{i}}{\sigma_i}\right)\right]\right\}^{b}}.
\end{eqnarray}

\subsection{Estimation}

Consider $n$ observations $(y_1,\vn_1),\ldots, (y_n,\vn_n)$,
where $y_i=\min\{\log(X_i),\log(C_i)\}$. The logarithm of the likelihood function
for $\tetn=(a,b,\betn_{1}^{\top}, \betn_{2}^{\top})^\top$ (assuming right censoring)
has the form
\begin{eqnarray}\label{verossim1}
l(\tetn)&=&r\log\left(\frac{a\,b}{\sigma_i}\right)\sum_{i \in F}z_i-a\sum_{i \in F}\exp(z_i)+(a\,b-1)\sum_{i \in F}\log\left\{1-\exp[-\exp(z_i)]\right\}-\nonumber\\&&
(b+1)\sum_{i \in F}\log\left\{[1-\exp\{-\exp(z_i)\}]^a+\exp[-a\exp(z_i)]\right\}+\nonumber\\&&
\sum_{i \in C}\log
\left\{1- \frac{\left[1-\exp\left\{-\exp\left(z_i\right)\right\}\right]^{a\,b}}
{\left\{\left[1-\exp\left\{-\exp\left(z_i\right)\right\}\right]^a+\exp\left[-a\exp\left(z_i\right)\right]\right\}^{b}}\right\},
\end{eqnarray}
 where $r$ is the number of uncensored observations (failures) and $z_i=(y_i-\mu_i)/\sigma_i$. Here, $F$ and $C$ are the sets for the
 uncensored individuals and individuals with right censoring,
 respectively.

Equation (\ref{verossim1}) can be maximized using SAS (Proc NLMixed) or {\tt R} (\textsc{optim}, \textsc{gamlss}) (R Development Core Team, 2022),
among others, with initial values for $\betn_1$ and $\betn_2$ equal to
those from the fit of the LW regression ($a=b=1$).

\subsection{Simulations study}

We perform some simulations in order to evaluate the accuracy of the MLEs. We obtain 1,000 random samples from the LEOLLW $(\mu, \sigma, a, b)$ model using \textsc{optim} package in {\tt R} for sample sizes $n=100,250$ and $500$ and censoring percentages approximately equal to $0\%$, $15\%$ and $45\%$. For each configuration, the log-lifetimes $\text{log}(x_1), \ldots , \text{log}(x_n)$ are generated from (\ref{log-linear}) with two covariates, i.e., ${\vn_i^\top}\betn=\beta_{01}+\beta_{11}v_1+\beta_{12}v_2$, where $v_1 \sim \text{uniform} (0, 1)$ and $v_{2} \sim \text{Binomial} (1,0.5)$. The censoring times $c_1,\ldots , c_n$ are generated from a uniform distribution $(0, \tau)$, where $\tau$ controls the censoring percentage. The true parameter values are: $\beta_{10}=3,\beta_{11}=2.5,\beta_{12}=1.9,\sigma=0.3,a=0.5$ and $b=0.9$.
The simulation process is given below:

(i) Generate $v_{i1} \sim \text{uniform} (0,1)$ and $v_{i2} \sim \text{Binomial} (1,0.5)$;

(ii) Generate $z_i \sim \text{LEOLLW}(0,1,a,b)$ (\ref{densZ});

(iii) Calculate $y_{i}^{*} = \beta_{0}+\beta_{1} v_{i1} +\beta_2 v_{i2} + \sigma\, z_i$;

(iv) Generate $c_{i} \sim \text{uniform} (0,\tau)$;

(v) Calculate the survival times $y_i=\text{min}(y_{i}^{*},c_i)$;

(vi) If $y_{i}^{*} <c_i$, then $\delta_i = 1$; otherwise, $\delta_i = 0$, for $i = 1,\ldots, n$.

Table \ref{simulation1} reports the average estimates (AEs), biases
and mean square errors (MSEs) of the MLEs of the parameters. These results show
that the AEs converge to the true parameters and the biases
and MSEs decrease when $n$ increases, thus indicating consistent
estimators. The empirical coverage probabilities (CPs) of the
parameters for the 95\% confidence interval given in Table \ref{cps}
also reveal that the CPs tend to the confidence level.

\begin{table}[ht]
\centering
\caption{Simulatins results from the fitted LEOLLW regression.}
{\scriptsize{
\begin{tabular}{rr@{\hskip 0.4in}lrr@{\hskip 0.4in}rrr@{\hskip 0.4in}rrrr}
  \hline
   &  & & $n=100$ &  & & $n=250$ &  &  & $n=500$ &  \\
  \cline{3-5} \cline{6-8} \cline{9-11}
  $\%$ & $\theta$ & AEs & Biases & MSEs & AEs & Biases & MSEs & AEs & Biases & MSEs \\
   \hline
\multirow{6}{*}{$0\%$} & $\beta_{10}$ & 2.9621 & -0.0379 & 0.0599 & 2.9776 & -0.0224 & 0.0292 & 2.9882 & -0.0118 & 0.0168 \\
   &  $\beta_{11}$ & 2.4958 & -0.0042 & 0.0226 & 2.4969 & -0.0031 & 0.0082 & 2.4990 & -0.0010 & 0.0033 \\
  &  $\beta_{12}$ & 1.9025 & 0.0025 & 0.0070 & 1.8997 & -0.0003 & 0.0027 & 1.9025 & 0.0025 & 0.0013 \\
  &  $\sigma$ & 0.2898 & -0.0102 & 0.0087 & 0.2940 & -0.0060 & 0.0042 & 0.2967 & -0.0033 & 0.0022 \\
  & $a$ & 0.4598 & -0.0402 & 0.0202 & 0.4719 & -0.0281 & 0.0065 & 0.4872 & -0.0128 & 0.0037 \\
& $b$ & 1.0057 & 0.1057 & 0.1968 & 0.9646 & 0.0646 & 0.1006 & 0.9325 & 0.0325 & 0.0567 \\
   \hline

   \multirow{6}{*}{$10\%$} & $\beta_{10}$  & 2.9387 & -0.0613 & 0.0617 & 2.9660 & -0.0340 & 0.0309 & 2.9868 & -0.0132 & 0.0184 \\
  & $\beta_{11}$ & 2.4934 & -0.0066 & 0.0258 & 2.4977 & -0.0023 & 0.0099 & 2.4995 & -0.0005 & 0.0039 \\
    & $\beta_{12}$ & 1.9017 & 0.0017 & 0.0074 & 1.9014 & 0.0014 & 0.0030 & 1.9026 & 0.0026 & 0.0015 \\
    & $\sigma$ & 0.3021 & 0.0021 & 0.0088 & 0.2974 & -0.0026 & 0.0041 & 0.2964 & -0.0036 & 0.0025 \\
   & $a$ & 0.4687 & -0.0313 & 0.0228 & 0.4697 & -0.0303 & 0.0069 & 0.4856 & -0.0144 & 0.0039 \\
  & $b$ & 1.0497 & 0.1497 & 0.2115 & 0.9870 & 0.0870 & 0.1060 & 0.9369 & 0.0369 & 0.0621 \\

   \hline
\multirow{6}{*}{$30\%$} & $\beta_{10}$ & 2.9078 & -0.0922 & 0.0785 & 2.9498 & -0.0502 & 0.0338 & 2.9802 & -0.0198 & 0.0228 \\
 & $\beta_{11}$ & 2.4884 & -0.0116 & 0.0379 & 2.4960 & -0.0040 & 0.0133 & 2.4986 & -0.0014 & 0.0057 \\
  & $\beta_{12}$ & 1.8956 & -0.0044 & 0.0113 & 1.8996 & -0.0004 & 0.0040 & 1.9020 & 0.0020 & 0.0020 \\
  & $\sigma$ & 0.3166 & 0.0166 & 0.0119 & 0.3021 & 0.0021 & 0.0043 & 0.2967 & -0.0033 & 0.0029 \\
  & $a$ & 0.4718 & -0.0282 & 0.0351 & 0.4664 & -0.0336 & 0.0097 & 0.4813 & -0.0187 & 0.0049 \\
  & $b$ & 1.1274 & 0.2274 & 0.2771 & 1.0185 & 0.1185 & 0.1198 & 0.9519 & 0.0519 & 0.0764 \\

   \hline
\end{tabular}
\label{simulation1}}}
\end{table}

\begin{table}[ht]
\centering
\caption{CPs for the 95\% nominal level from the fitted LEOLLW regression.}
{\scriptsize{
\begin{tabular}{c@{\hskip 0.4in}ccc@{\hskip 0.4in}ccc@{\hskip 0.4in}ccc}
  \hline
  & \multicolumn{3}{c}{$0\%$}  & \multicolumn{3}{c}{$10\%$}  & \multicolumn{3}{c}{$30\%$} \\
    \cline{2-10}
$n$ & $100$ &$250$ &$500$   & $100$ &$250$ &$500$ & $100$ &$250$ &$500$  \\
 \cline{2-10}
$\beta_{10}$ & 0.953 & 0.951 & 0.953 & 0.976 & 0.962 & 0.957  & 0.980 & 0.973 & 0.960  \\
  $\beta_{11}$ & 0.917 & 0.934 & 0.967  & 0.930 & 0.924 & 0.960 & 0.927 & 0.929 & 0.945 \\
  $\beta_{12}$ & 0.922 & 0.928 & 0.935 & 0.941 & 0.932 & 0.945 & 0.932 & 0.933 & 0.940 \\
  $\sigma$ & 0.909 & 0.926 & 0.949 & 0.954 & 0.950 & 0.947 & 0.976 & 0.979 & 0.958\\
  $a$ & 0.839 & 0.891 & 0.914  & 0.845 & 0.879 & 0.911 & 0.822 & 0.860 & 0.899 \\
  $b$ & 0.961 & 0.948 & 0.951   & 0.974 & 0.967 & 0.957 & 0.977 & 0.983 & 0.963 \\
   \hline
\end{tabular}
\label{cps}}}
\end{table}

\section{Residual Analysis}\label{sec:residuos}

\hspace{0.5cm} Residuals are important when determining the adequacy of a
regression model and detection of outliers. They play a crucial role in validating the regression by examining the residual plots; see, for example, Cox and Snell (1968), Cook and Weisberg (1982), Collet (2003), Ortega et al. (2008), Silva et al. (2011) and recently Hashimo et al. (2021).\\

{\bf Martingale residuals}\\

We adopt the martingale residuals
$r_{M_i}=\delta_i+\log[S(y_i;\widehat{\tetn})]$ (Fleming and Harrington, 1991),
where $\delta_i$ is the censoring indicator, i.e., $\delta_i=0\,\, (\delta_i=1)$
is a censored (uncensored) observation and $S(y_i;\widehat{\tetn})$ is
the estimated survival function (\ref{Sy_x}). By setting  $\hat{z}_i=(y_{i}-\xn^T_i\widehat{\betn})/\widehat{\sigma}$,
these residuals (under right censoring) become
\begin{eqnarray}\label{martingale_right}
r_{M_i}
=
\left\{
\begin{array}{ll}
1+\log\left\{
1- \frac{\left[1-\exp\left\{-\exp\left(\hat{z}_i\right)\right\}\right]^{\hat{a}\,\hat{b}}}
{\left\{\left[1-\exp\left\{-\exp\left(\hat{z}_i\right)\right\}\right]^{\hat{a}}+\exp\left[-\hat{a}\exp\left(\hat{z}_i\right)\right]\right\}^{\hat{b}}}\right
\}&
\mbox{if}\,\,\,\, \delta_i=1,
\\\\
\log\left\{
1- \frac{\left[1-\exp\left\{-\exp\left(\hat{z}_i\right)\right\}\right]^{\hat{a}\,\hat{b}}}
{\left\{\left[1-\exp\left\{-\exp\left(\hat{z}_i\right)\right\}\right]^{\hat{a}}+\exp\left[-\hat{a}\exp\left(\hat{z}_i\right)\right]\right\}^{\hat{b}}}
\right\}&
\mbox{if}\,\,\,\, \delta_i=0.
\end{array} \right.
\end{eqnarray}\\

{\bf Modified deviance residuals}\\

\hspace{0.5cm}The deviance component residuals (Therneau et al., 1990; Collett, 2003)
can be expressed as
\begin{displaymath}
r_{D_i}=\mbox{sgn}({r}_{M_i})\left\{-2\left[{r}_{M_i}+\delta_i\log\left(\delta_i-{r}_{M_i}\right)\right]\right\}^{1/2}.
\end{displaymath}
where $r_{M_i}$ is given by \eqref{martingale_right}.

\subsection{Simulations}

One thousand samples are generated based on each scenario of $n,\beta_{10},\beta_{11}, \beta_{12},\sigma,a,b$ and censoring rates from the previous simulation study.
After fitting the regression (\ref{log-linear}), we obtain the residuals $r_{M_i}$'s and $ r_{D_i}$'s. Figures \ref{rm_1}-\ref{rm_3} and \ref{rd_1}-\ref{rd_3}
provide the normal probability plots. They show that the empirical
distribution of the $r_{M_i}$'s is asymmetric around zero and presents
accentuated kurtosis. The residuals $r_{D_i}$'s have an empirical distribution
in good agreement with the standard normal distribution for lower censoring rates
and larger sample sizes.

\begin{figure}[htb!]
		\begin{center}
n=100 \hspace{4.5cm} n=250\hspace{4.5cm} n=500 \\
			\includegraphics[width=16cm,height=4.5cm]{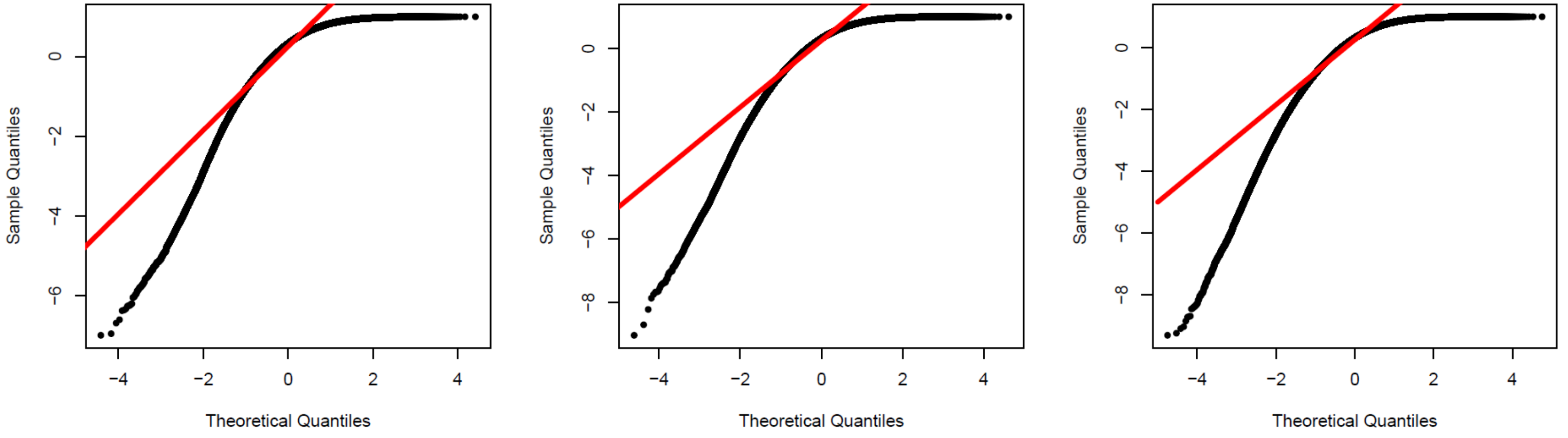}
			\caption{Normal probability plots for $r_{M_i}$'s without censoring}
			\label{rm_1}
		\end{center}
	\end{figure}
\begin{figure}[htb!]
		\begin{center}
n=100 \hspace{4.5cm} n=250\hspace{4.5cm} n=500 \\
			\includegraphics[width=16cm,height=4.5cm]{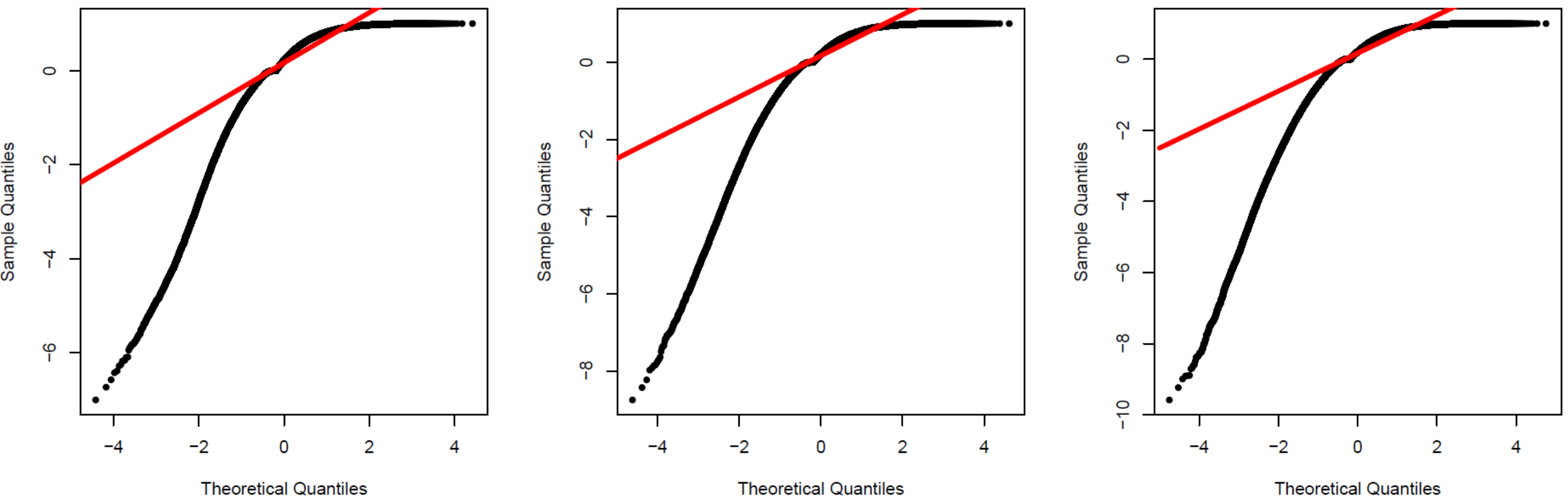}
			\caption{Normal probability plots for $r_{M_i}$'s with censoring rate $10\%$}
			\label{rm_2}
		\end{center}
	\end{figure}

\begin{figure}[htb!]
		\begin{center}
n=100 \hspace{4.5cm} n=250\hspace{4.5cm} n=500 \\
			\includegraphics[width=16cm,height=4.5cm]{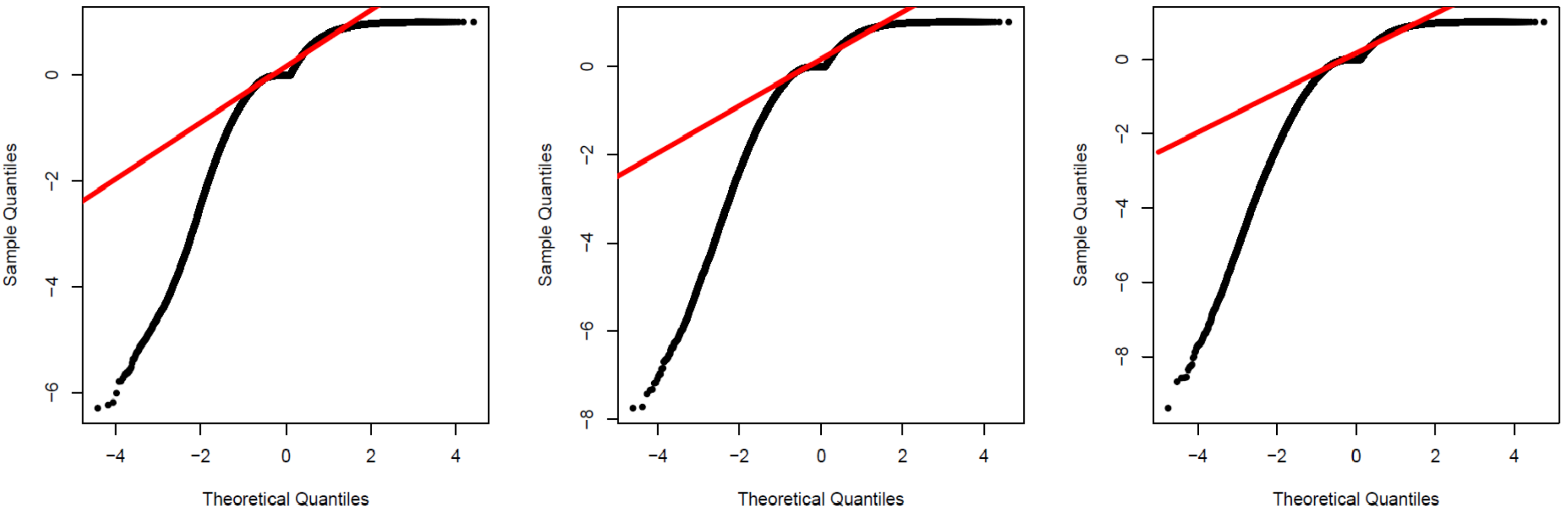}
			\caption{Normal probability plots for $r_{M_i}$'s with censoring rate $10\%$}
			\label{rm_3}
		\end{center}
	\end{figure}

\begin{figure}[htb!]
		\begin{center}
n=100 \hspace{4.5cm} n=250\hspace{4.5cm} n=500 \\
			\includegraphics[width=16cm,height=4.5cm]{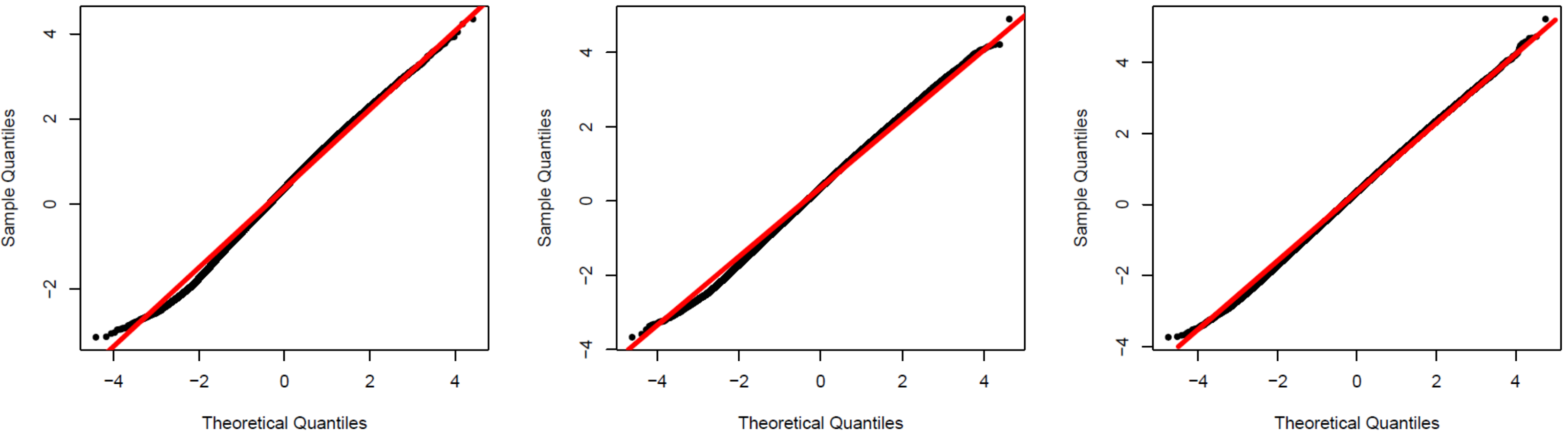}
			\caption{Normal probability plots for $r_{D_i}$'s without censoring}
			\label{rd_1}
		\end{center}
	\end{figure}
\begin{figure}[htb!]
		\begin{center}
n=100 \hspace{4.5cm} n=250\hspace{4.5cm} n=500 \\
			\includegraphics[width=16cm,height=4.5cm]{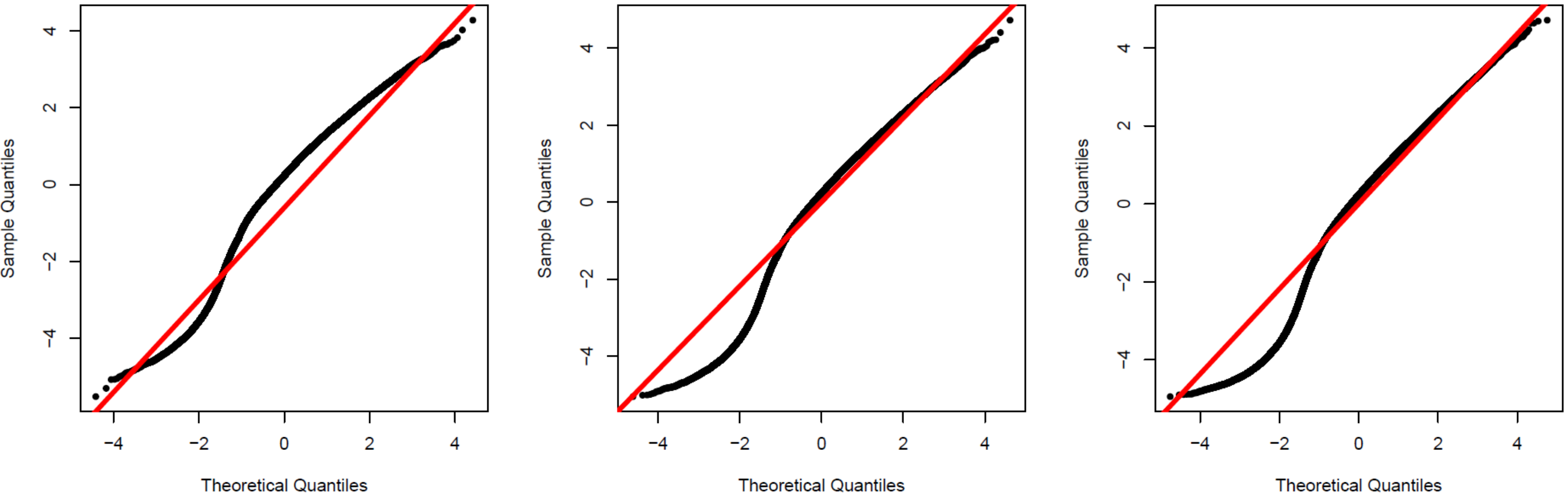}
			\caption{Normal probability plots for $r_{D_i}$'s with censoring rate $10\%$}
			\label{rd_2}
		\end{center}
	\end{figure}

\begin{figure}[htb!]
		\begin{center}
n=100 \hspace{4.5cm} n=250\hspace{4.5cm} n=500 \\
			\includegraphics[width=16cm,height=4.5cm]{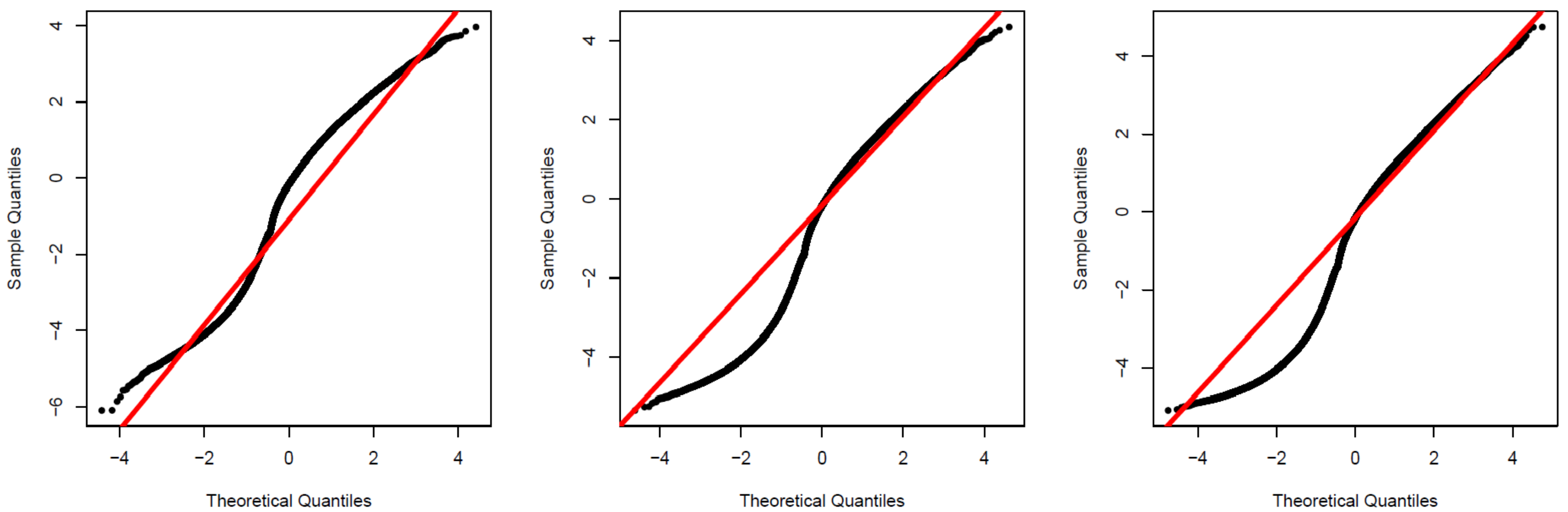}
			\caption{Normal probability plots for $r_{D_i}$'s with censoring rate $30\%$}
			\label{rd_3}
		\end{center}
	\end{figure}

\section{Application: Japanese-Brazilian emigration data}\label{sec:application}

We compare the fits of the LW, LEW, LOLLW and LEOLLW models by calculating
the MLEs, their standard errors (SEs) and the values of the Akaike Information Criterion (AIC), Consistent Akaike, Information Criterion (CAIC), and Bayesian Information Criterion (BIC) using the \textsc{gamlss} package in {\tt R} software ({\tt R} Core Team, 2022).

Based on technological development and economic growth in the mid-1980s,
Japan began to attract many immigrants from Brazil with Japanese ancestry. This phenomenon intensified after June 1990, and at the end of the 1990s these
Brazilians formed the third largest community of foreigners living in Japan, with approximately 312,979 people in 2010, behind only Koreans and Chinese (Kawamura, 1999). However, afterward various global crises severely affected Japan, with negative repercussions on the insertion of immigrants in the labor market, leading to the need for professional retraining as a condition for remaining in Japan, unlike the initial situation where high qualification was not required. In response to a request from the Japanese government to the Brazilian government in 2008, Federal University of Mato Grosso (UFMT), by means of the Brazilian Open University program (UAB), together with Tokai University, locted in the city of Hiratsu, Japan, began offering a teacher training course in the distance learning modality. The course, which lasts 4 years, began in 2009, with the aim of qualifying 300 Brazilian teachers working in Japan to work in Brazilian and Japanese schools. Thus, this study seeks to identify the factors that influence the time spent in Japan of the students of the teacher training course offered by UFMT/Tokai, because it is known that the length of stay can be affected by covariables, which are extremely important to the model used in this analysis. The data were obtained by an electronic survey (Babbie, 1999) with the objective to get
the characteristics, actions and/or opinions of the group of students using the internet as a learning tool. The survey was conducted in the first school semester of 2010 by means of a reserved site with access only by students, for which 246
completed questionnaires were received. Of these, only 150 were used for analysis because of responses by students of other nationalities. We consider the time (in years) spent in Japan as a response variable counted from the arrival
date until July 2012, with censoring of students who returned to Brazil at least
once. The variables under study are:

\begin{itemize}
\itemsep-0.3em

\item $y_i$: time spent in Japan (in years);
\item $\delta_{i}$: the censoring indicator (0 = censored, 1 = failure);
\item $v_{i1}:$ sex (0 = female, 1 = male);
\item $v_{i2}:$ age (in years);
\item $v_{i3}:$ reason for migration, (0=accompany family, 1=better living conditions, 2=study, 3=new experiences/undeclared), defined by three dummy variables.
\end{itemize}

We fit the regression
$$
y_i=\mu_{i}+\sigma_{i} z_i, \,\,\,\,\, i=1, \ldots, n,$$
where the response variable follows the LEOLLW distribution in (\ref{fy}), and the systematic components are (for $i=1,\ldots,150$)
$$\mu_{i}=\beta_{10}+\beta_{11}v_{i1}+\beta_{12}v_{i2}+\beta_{13}v_{i3}+\beta_{14}v_{i4}+\beta_{15}v_{i5}$$
and
$$\sigma_{i}=\exp(\beta_{20}+\beta_{21}v_{i1}+\beta_{22}v_{i2}+\beta_{23}v_{i3}+\beta_{24}v_{i4}+\beta_{25}v_{i5}).$$

The initial values for $\betn_{1}^{\top}$ and $\betn_{2}^{\top}$ were obtained
from the fitted LW regression.  Table \ref{aic} reports the values of the previous statistics for some fitted models, which indicate that the LEOLLW regression model can be chosen as the best model.

\begin{table}[ht]
\caption{Measures for some fitted regressions to the Japanese-Brazilian emigration data.} \label{aic} \centering  \vspace*{0.3cm}
\begin{tabular}{rrrr}
  \hline\hline
Regression & AIC & BIC & CAIC \\
  \hline \hline
LEOLLW & 172.22 & 214.37 & 168.26 \\
 LOLLW & 293.00 & 332.14 & 297.06 \\
  LExpW & 186.95 & 226.08 & 191.00 \\
  LW & 216.32 & 252.45 & 228.39 \\
   \hline \hline
\end{tabular}
\end{table}
A comparison of the proposed regression with some of its sub-models via
likelihood ration (LR) statistics is given in Table \ref{lr2}. So,
the LEOLLW regression gives a better fit to these data than
the other three sub-models.

\begin{table}[!htb]
\caption{LR statistics for the Japanese-Brazilian emigration data.}
\label{lr2}\centering {\vspace*{0.3cm}
\begin{tabular}{c|c|c|c}
\hline\hline
Model & Hypotheses & Statistic w & $p$-value \\ \hline\hline
LEOLLW vs LOLLW & $H_{0}:b=1$ vs $H_{1}: H_{0}\, \mbox{is false}$ & 122.787 & $<$0.00001 \\
LEOLLW vs LExpW & $H_{0}:a=1$ vs $H_{1}: H_{0}\, \mbox{is false}$ & 16.7293 & $<$0.00001 \\
LEOLLW vs Log-Weibull & $H_{0}:b=a=1$ vs $H_{1}: H_{0}\, \mbox{is false}$ & 48.1057 & $<$0.00001 \\ \hline\hline
\end{tabular}%
}
\end{table}

Table \ref{mle} gives the MLEs (and their SEs in parentheses) of
the parameters, which reveal that the covariates sex, age, and reasons for
migration (study and better living conditions) are significant
for the mean parameter $\mu$ at the 5\%. The three reasons for migration (study, better living conditions and new experiences) contribute to the dispersion of data.

\begin{table}[ht]
\centering
\caption{Results from the LEOLLW regression fitted to Japanese-Brazilian emigration data.} \vspace*{0.3cm}
\label{mle}
\begin{tabular}{rrrrr}
 \hline \hline
 & MLEs & SEs &   p-values \\
  \hline \hline
$\beta_{10}$ & 3.7403 & 0.1154 &  $<0.0001$ \\
  $\beta_{11}$ & -0.1093 & 0.0476 &  0.0230 \\
  $\beta_{12}$ & -0.0086 & 0.0026 & 0.0012 \\
  $\beta_{13}$ & -0.1262 & 0.0493 & 0.0115 \\
  $\beta_{14}$ & -0.1838 & 0.0615 &  0.0033 \\
 $\beta_{15}$ & -0.0902 & 0.0635 &  0.1580 \\
  $\beta_{20}$ & 2.6087 & 0.2917 &  $<0.0001$ \\
  $\beta_{21}$ & -0.0461 & 0.1157 &  0.6906 \\
  $\beta_{22}$ & -0.0675 & 0.0065 &  $<0.0001$ \\
  $\beta_{23}$ & -0.3974 & 0.1005 &  0.0001 \\
 $\beta_{24}$ & -0.4286 & 0.1548 &  0.0064 \\
  $\beta_{25}$ & -0.3343 & 0.1596 &  0.0379 \\
  $\text{log}(a)$ & 3.0955 & 0.0758 &  \\
  $\text{log}(b)$ & -2.9711 & 0.0789 &  \\
\hline \hline
\end{tabular}
\end{table}

From the fitted LEOLLW regression to Japan's data, Figure \ref{res}a gives
the plot of the modified deviance residuals (\ref{res}) versus the observation
index, whereas Figure \ref{res}b gives the normal probability plot with generated envelope (Atkinson, 1987). Both figures support the LEOLLW regression for modelling these data.\\

\begin{figure}[!htb]
\begin{center}
(a)\hspace{5cm} (b) \\
\includegraphics[height=5cm]{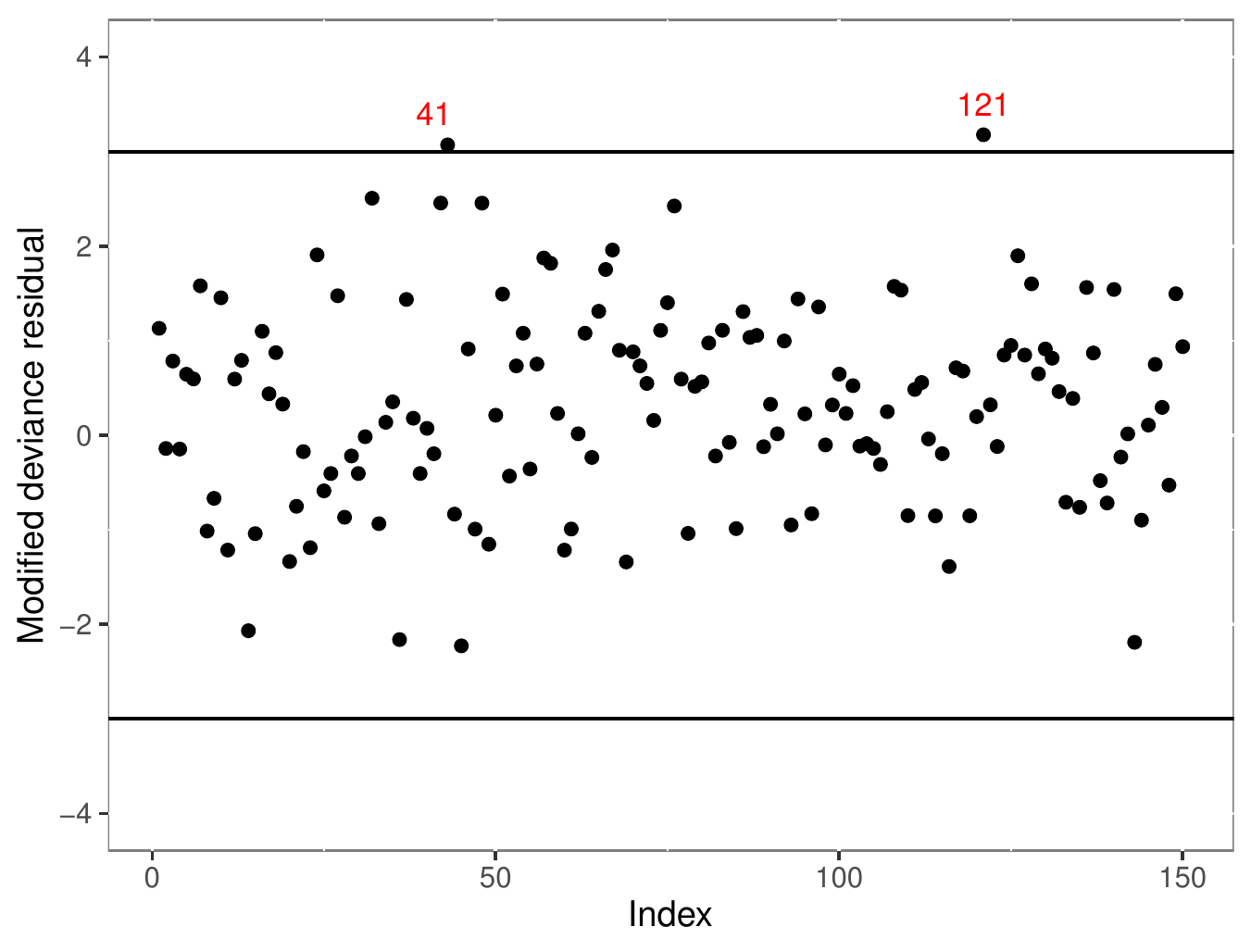}
\includegraphics[height=5cm]{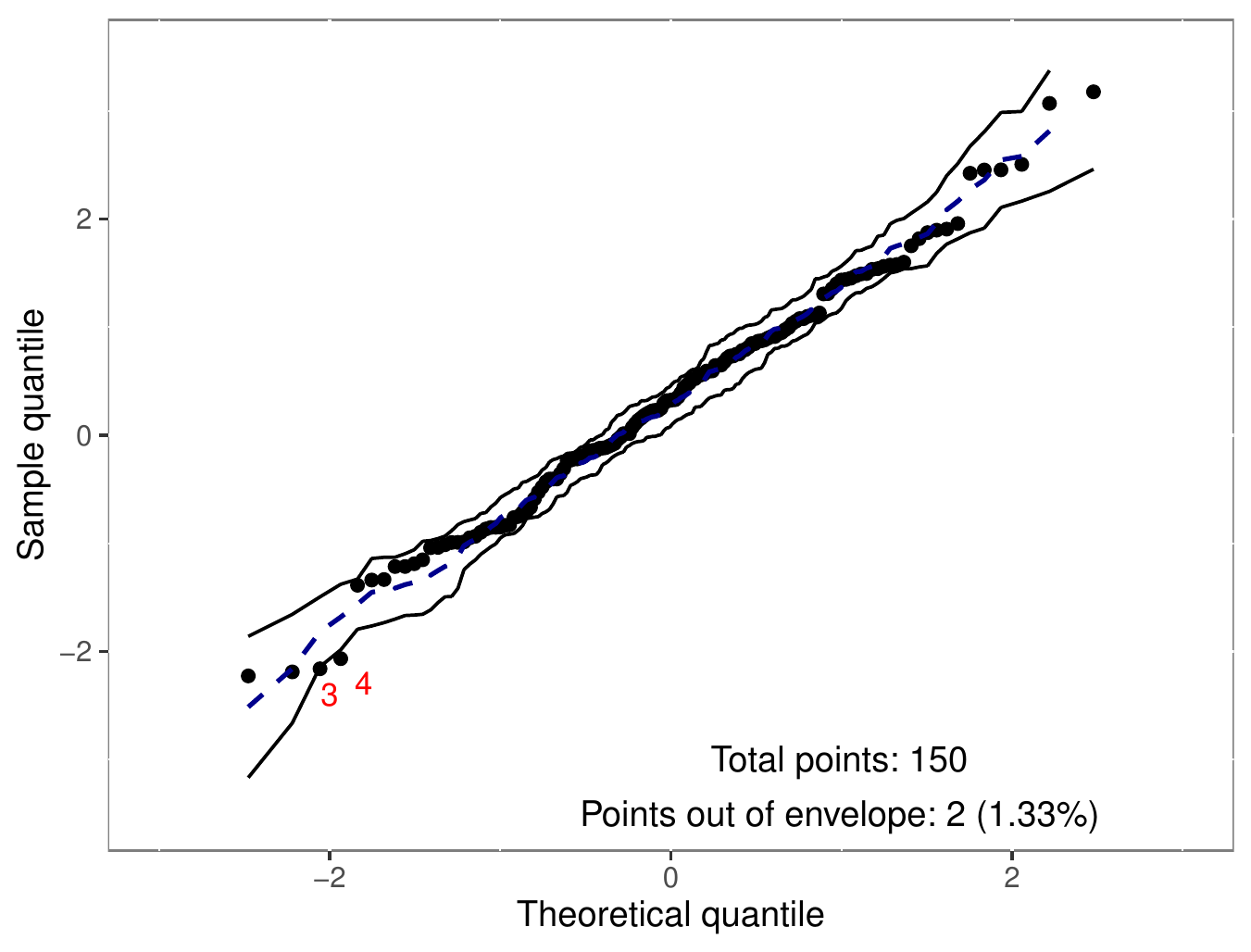}
\end{center}
\caption{(a) Modified deviance residuals versus observation index. (b)
Normal probability plot with envelope for the modified deviance residuals.}
\label{res}
\end{figure}

{\bf Interpretations for $\mu_i$:}\\

\begin{itemize}
  \item There is a significant difference between men and women in relation to the length of stay in Japan.
  \item The length of stay in Japan tends to decrease when the age increases.
  \item A significant difference exists between those who accompanied their family
  and those seeking better living conditions in relation to the length of stay in Japan.
  \item A significant difference exists between those who accompanied their family and those who came for study in relation to the length of stay.
\end{itemize}

{\bf Interpretations for $\sigma_i$:}\\

\begin{itemize}
  \item The variability of the length of stay in Japan tends to decline significantly
   when the age increases.
  \item There is a significant difference between those who accompanied their family and those seeking better living conditions in relation to the variability of the length of stay in Japan.
  \item A significant difference exists between those who accompanied their family and those who came for study in relation to the variability of the length of stay in Japan.
  \item A significant difference exists between those who accompanied their family and those seeking new experiences in relation to the variability of the length of stay.
\end{itemize}

\section{Conclusions}\label{sec:conclusao}

We obtained new mathematical properties of the exponentiated odd log-logistic (EOLL-G) family of distributions. Two new distributions, called the exponentiated odd log-logistic Weibull (EOLLW) and log exponentiated odd log-logistic Weibull (LEOLLW), were proposed and their structural properties were studied.
We defined a new location-scale regression model based on the LEOLLW distribution for censored data, and calculated the maximum likelihood estimates. Some simulations
showed that the empirical distribution of the residuals can be close to the standard normal distribution. We showed that the proposed regression model fitted well to a Japanese-Brazilian emigration data set.

\section*{Acknowledgments}
The authors are very grateful to the editor and two referees for helpful comments.
The financial support from CAPES and CNPq is gratefully acknowledged

\section*{References}
\begin{description}

\item Atkinson, A. C. (1987). {\it Plots, transformations and regression: an introduction to graphical methods of diagnostics regression analysis}.
2nd ed. Oxford: Clarendon Press, 282p.

\item Alizadeh, M., Tahmasebi, S. and Haghbin, H. (2020). The exponentiated odd log-logistic family of distributions: Properties and applications. {\it Journal of Statistical Modelling: Theory and Applications}, {\bf 1}, 29-52.

\item Babbie, E. (1991). {\it M\'etodos de pesquisas de Survey/Earl Babbie}. Coleção Aprender, tradu{\c{c}}{\~o}o de: Survey research methods. Belo Horizonte, Brazil: Editora UFMG.  519p.

\item Burr, I.W. (1942). Cumulative frequency functions. {\it Annals of Mathematical Statistics}, {\bf 13}, 215-232.

\item Collett, D. (2003) {\it Modelling survival data in medical research.} London: Chapman \& Hall. 389p.

\item Cook,  R,D. and Weisberg, S. (1982). {\it Residuals and influence in regression.} New York: Chapman \& Hall. 230p.

\item Cox, D.R. and Snell, E.J. (1968).  A general definition of residuals. {\it Journal of the Royal Statistical Society: Series B}, {\bf 30}, 248–275.

\item Dagum, C. (1975). A model of income distribution and the conditions of existence of moments of finite order. {\it Bulletin of the International Statistical Institute},
 {\bf 46}, 199-205.

\item Fleming, T.T. and Harrington, D.P. (1991). {\it Counting Process and Survival Analysis}. Wiley, New York.

\item Glaser, R.E. (1980). Bathtub and related failure rate characterizations. {\it Journal of the American Statistical
    Association}, {\bf 75}, 667-672.

\item Gleaton, J.U. and  Lynch, J.D. (2006)
Properties of generalized log-logistic families of lifetime
distributions.
{\it Journal of Probability and Statistical Science}, {\bf 4}, 51-64.

\item Griffiths, L. (1947). {\it Introduction to the Theory of Equations}. J. Wiley, New York.

\item Hashimoto, E.M., Ortega, E.M.M., Cordeiro, G.M., Cancho, V.G. and Silva, I. (2021).
The re-parameterized inverse Gaussian regression to model length of stay of COVID-19 patients in the public health care
system of Piracicaba, Brazil.
{\it Journal of Applied Statistics}, DOI:$10.1080/02664763.2022.2036707$.

\item Kawamura, L.K. (1999). Para onde v{\~a}o os brasileiros? Imigrantes brasileiros no Jap{\~a}o.
Campinas, Brazil : Editora da Unicamp, 236p.

\item Mudholkar, G.S., Srivastava, D.K. and Kollia, G. (1996) A generalization of the Weibull distribution with application to the analysis of survival data.
{\it Journal of the American Statistical     Association}, {\bf 91}, 1575–1583

\item Ortega, E.M.M., Paula, G.A. and Bolfarine, H. (2008). Deviance residuals in generalized log-gamma regression models with
censored observations. {\it Journal of Statistical Computation and Simulation}, {\bf 78}, 47–764.

\item R Development Core Team (2022)
{\it R: A Language and Environment for Statistical Computing.}
R Foundation for Statistical Computing, Vienna, Austria.

\item Silva, G.O., Ortega, E.M.M. and Paula, G.A. (2011).
Residuals for log-Burr XII regression models in survival analysis.
{\it Journal of Applied Statistics}, {\bf 38}, 1435-1445.

\item Therneau, T.M., Grambsch, P.M. nd Fleming, T.R. (1990). Martingale-based residuals for survival models. {\it Biometrika}, {\bf 77}, 147–160.

\item Xue, J. (2012). {\it Loop Tiling for Parallelism}. The Springer International Series in Engineering and Computer Science. Springer: US. 256p.


\end{description}

\end{document}